\let\mathcal\mathscr
\let\phi=\varphi
\let\kappa=\varkappa
\DeclareMathOperator{\rank}{rank}
\DeclareMathOperator{\sym}{sym}
\DeclareMathOperator{\const}{const}
\newcommand*{\sd}[2]{\{\,#1\mid#2\,\}}
\newcommand*{\eval}[1]{\left.#1\right|}
\newcommand*{\abs}[1]{\left|#1\right|}
\newcommand*{\Ev}{\mathbf{E}}
\newcommand*{\dc}{\mathbf{X}}
\newcommand*{\ldb}{[\![}
\newcommand*{\rdb}{]\!]}
\theoremstyle{theorem}
\newtheorem{proposition}{Proposition}
\newtheorem{theorem}{Theorem}
\theoremstyle{definition}
\theoremstyle{remark}
\newtheorem{remark}{Remark}
\let\mathcal\mathscr
\newcommand{\cprime}{\/{\mathsurround=0pt$'$}}
\begin{document}
\title[The Pavlov-Mikhalev equation]{On recursion operators for symmetries\\
  of the Pavlov-Mikhalev equation}

\author{I.S.~Krasil'shchik} \address{Trapeznikov Institute of Control
  Sciences, 65 Profsoyuznaya street, Moscow 117997, Russia}
\email{josephkra@gmail.com}\thanks{The work was partially supported by
  the Russian Science Foundation Grant 21-71-20034.}

\begin{abstract}
  In geometry of nonlinear partial differential equations, recursion
  operators that act on symmetries of an equation~$\mathcal{E}$ are understood
  as B\"{a}cklund auto-transformations of the equation~$\mathcal{TE}$ tangent
  to~$\mathcal{E}$. We apply this approach to a natural two-component
  extension of the 3D Pavlov-Mikhalev equation
  \begin{equation*}
    u_{yy} = u_{tx} + u_yu_{xx} - u_xu_{xy}.
  \end{equation*}
  We describe the Lie algebra of symmetries for this extension, construct two
  recursion operators (one of them was known earlier) and find their
  action. We also establish the hereditary property of these operators as well
  as their compatibility (in the sense of the Fr\"{o}licher-Nijenhuis
  bracket). We find also twelve additional operators which are degenerate in a
  sense (we call them \emph{queer}) and discuss their properties. In the
  concluding part, a geometrical background of two-component conservation laws
  for multi-dimensional equations is exposed together with its relations to
  differential coverings.
\end{abstract}

\keywords{Partial differential equations, integrable linearly degenerate
  equations, nonlocal symmetries, recursion operators, the Pavlov-Mikhalev
  equation, two-component conservation laws}

\subjclass[2010]{35B06}

\maketitle

\tableofcontents

\section*{Introduction}
\label{sec:introduction}

In a recent paper~\cite{Kr-Ver-AMP}, we studied a two-component Lagrangian
extension of the so-called 3D rdDym equation and showed how the geometric
approach to nonlinear PDEs (see~\cite{AMS}) facilitates efficient construction
of recursion operators. Below we apply similar techniques to the the
Pavlov-Mikhalev equation
\begin{equation}
  \label{eq:1}
  u_{yy} = u_{tx} + u_yu_{xx} - u_xu_{xy},
\end{equation}
see~\cite{Mikh, Pav}, as well as~\cite{Dun-2004}. Equation~\eqref{eq:1}
belongs to the same class as the rdDym one: it is linearly
degenerate~\cite{Fer-Mos} and admits a Lax pair with non-removable
parameter~\cite{Dun-2004, Mikh, Pav}. A recursion operator for symmetries
of~\eqref{eq:1} was found by O.~Morozov and is described
in~\cite{Baran-etal-compar-stud}.

To be more precise, we deal with the system~$\mathcal{E}$
\begin{equation}
  \label{eq:2}
  \begin{array}{l}
    v_{yy} = v_{xt} - u_xv_{xy} - 2v_yu_{xx} + 2v_xu_{xy} + u_yv_{xx},\\[3pt]
    u_{yy} = u_{xt} - u_xu_{xy} + u_yu_{xx},
  \end{array}
\end{equation}
where the first equation is obtained by applying the adjoint linearization
of~\eqref{eq:1} to a new variable~$v$. The system may be considered as a
\emph{Lagrangian deformation} of the initial equation,
see~\cite{Baran-etal-Lagr-deform}, and coincides with cotangent equation of
the latter (see details in~\cite{K-V-Gdeq, Kra-Ver-Vit-Springer}).

The algorithm of construction and analysis of recursion operators that we use
below is based on the interpretation of these operators as B\"{a}cklund
auto-trans\-forma\-tions of the \emph{tangent equation},
see~\cite{Mar-another}, and essentially uses the theory of differential
coverings~\cite{VinKrasTrends}. It should be noted that the tangent equation
as an efficient instrument the theory of recursion operators was intensively
used in~\cite{Ker-Kras-Kluwer}.

The algorithm itself can be shortly described as follows\footnote{The exact
  meaning and details will be explained in
  Section~\ref{sec:basic-constr-notat}}:
\begin{itemize}
\item Consider the infinite prolongation of the equation under study as a
  submanifold~$\mathcal{E}\subset J^\infty(\pi)$ in an appropriate jet space.
\item Choose convenient internal coordinates in~$\mathcal{E}$ for particular
  computations, cf.~\cite{Mar-ort}.
\item Compute symmetries of~$\mathcal{E}$ solving the
  equation~$\ell_{\mathcal{E}}(\phi) = 0$, where~$\ell_{\mathcal{E}}$ is the
  linearization operator of~$\mathcal{E}$.
\item Construct the tangent equation~$\mathcal{TE}$ by adding the
  equations~$\ell_{\mathcal{E}}(q) = 0$ to the initial ones. Here~$q$ is a new
  \emph{odd} (and this is essential) dependent variable.
\item Find differential coverings over~$\mathcal{TE}$ linear in~$q$ and its
  derivatives. Usually (but not necessary) these coverings are associated with
  two-component conservation laws of~$\mathcal{TE}$. Let~$w$ denote the
  corresponding nonlocal variables.
\item Find nonlocal shadows of symmetries linear both in~$q$ and~$w$. A pair
  (covering, shadow) provides a B\"{a}cklund auto-transformation
  of~$\mathcal{TE}$, i.e., a recursion operator.
\item Study the action of the operator on symmetries.
\item For hereditary and compatibility properties, find nonlocal symmetries
  that correspond to the shadows, if the former exist. Their super-commutators
  are exactly the needed  Fr\"{o}licher-Nijenhuis brackets.
\end{itemize}
All these steps are accomplished in
Sections~\ref{sec:equat-its-symm}--\ref{sec:recurs-oper-their}. Section~\ref{sec:basic-constr-notat}
contains the theoretical background which is necessary for the subsequent
exposition. It should be noted that our results comprise two types of
recursion operators: two of them are quite ``conventional'', while twelve are
degenerate, ``queer''. Nevertheless, existence of these operators reflect
specific properties of the equation under study. We discuss them in
Section~\ref{sec:disc-queer-oper}.

Some of our computational results are rather voluminous and to make the reading
more comfortable we place them in
Appendices~\ref{sec:expl-form-symm}--\ref{sec:conv-pres}.

\section{Basic constructions and notation}
\label{sec:basic-constr-notat}

In our exposition here we follow the books~\cite{AMS, Kra-Ver-Vit-Springer}
and the paper~\cite{VinKrasTrends}. By a number of reasons, we stick to the
coordinate version.

\subsection{Jets}
\label{sec:jets}

Let~$\pi\colon \mathbb{R}^m\times\mathbb{R}^n = E \to \mathbb{R}^n = M$ be the
trivial bundle with the coordinates $x = (x^1,\dots,x^n)$ in~$\mathbb{R}^n$
(independent variables) and $u = (u^1,\dots,u^m)$ in~$\mathbb{R}^m$ (``unknown
functions''). The space of $k$-jets $J^k(\pi)$ is, in addition to~$x$ and~$u$,
endowed with the coordinates~$u_\sigma^j$, where~$\sigma$ is a symmetric
multi-index of length~$\leq k$. Variables~$u_\sigma^j$ correspond to the
partial derivatives~$\partial u^j/\partial x^\sigma$. Natural projections
\begin{equation*}
  \pi_k\colon J^k(\pi)\to M,\qquad \pi_{k,l}\colon J^k(\pi)\to J^l(\pi), \quad
  k\geq l,
\end{equation*}
are defined, and their inverse limit~$J^\infty(\pi)$ is called the space of
infinite jets. These projections define also the bundles
\begin{equation*}
  \pi_\infty\colon J^\infty(\pi)\to M,\qquad
  \pi_{\infty, k}\colon J^\infty(\pi)\to J^k(\pi)
\end{equation*}
and the embeddings
\begin{equation*}
  \mathcal{F}_l(\pi)\subset\mathcal{F}_k(\pi),\qquad
  \Lambda_l^i(\pi)\subset\Lambda_k^i(\phi),
\end{equation*}
where~$\mathcal{F}_l(\pi) = C^\infty(J^l(\pi))$,
$\Lambda_l^i(\pi) = \Lambda^i(J^l(\pi))$ are the $\mathbb{R}$-algebra of
smooth functions and the $\mathcal{F}_l(\pi)$-module of differential $i$-forms
on~$J^l(\pi)$, respectively. The corresponding objects on~$J^\infty(\pi)$ are
defined by
\begin{equation*}
  \mathcal{F}(\pi) = \bigcup_{l\geq0}\mathcal{F}_l(\pi),\qquad
  \Lambda^i(\pi) = \bigcup_{l\geq0}\Lambda_l^i(\pi).
\end{equation*}
A vector field on~$J^\infty(\pi)$ is by definition an $\mathbb{R}$-linear
derivation $X\colon \mathcal{F}(\pi)\to \mathcal{F}(\pi)$. The module of
vector fields is denoted by~$D(\pi)$.

Let $f\in \Gamma(\pi)$ be a (local) section of~$\pi$. Then the
section~$j_\infty(f) \in\Gamma(\pi_\infty)$ defined by
\begin{equation*}
  u_\sigma^j = \frac{\partial^\sigma f^j}{\partial x^\sigma}, \qquad
  j = 1,\dots, m,\quad \abs{\sigma}\geq 0,
\end{equation*}
is called the infinite jet of~$f$. Graphs of infinite jets passing through a
given point~$\theta\in J^\infty(\pi)$ are tangent to each other and their
common $n$-dimensional $\pi_\infty$-horizontal tangent plane is denoted
by~$\mathcal{C}_\theta$. The distribution $\mathcal{C}\colon \theta \mapsto
\mathcal{C}_\theta$ is called the Cartan distribution. It is spanned by the
vector fields
\begin{equation*}
  D_{x^i} = \frac{\partial}{\partial x^i} + \sum_{j,\sigma} u_{\sigma i}^j
  \frac{\partial}{\partial u_\sigma^j} \in D(\pi),
\end{equation*}
or dually, annihilates $1$-forms
\begin{equation*}
  \omega_\sigma^j = du_\sigma^j - \sum_i u_{\sigma i}^j\,dx^i\in \Lambda^1(\pi).
\end{equation*}
The field~$D_{x^i}$ is called the total derivative with respect to~$x^i$,
while~$\omega_\sigma^j$ are Cartan (or vertical) forms on~$J^\infty(\pi)$. The
Cartan distribution is Frobenius integrable, i.e., $[X,Y]\in \mathcal{C}$
whenever $X$, $Y\in \mathcal{C}$.

Finally, one has two natural splittings
\begin{equation*}
  D(\pi) = \mathcal{C} \oplus D^v(\pi), \qquad
  \Lambda^1(\pi) = \Lambda_v^1(\pi) \oplus \Lambda_h^1(\pi),
\end{equation*}
where~$D^v(\pi)$ is the submodule of $\pi_\infty$-vertical fields and
\begin{equation*}
  \Lambda_h^1(\pi) = \sd{\sum_i a_i\,dx^i}{a_i\in\mathcal{F}(\pi)},\quad
  \Lambda_v^1(\pi) = \sd{\sum_{j,\sigma}
    b_\sigma^j\omega_\sigma^j}{b_\sigma^j\in\mathcal{F}(\pi)} 
\end{equation*}
consist of horizontal and Cartan forms, respectively. Consequently,
\begin{equation*}
  \Lambda^i(\pi) = \bigoplus_{p+q=i}\Lambda_h^q(\pi)\wedge\Lambda_v^p(\pi),
\end{equation*}
where
\begin{equation}\label{eq:3}
  \Lambda_h^q(\pi) = \underbrace{\Lambda_h^1(\pi)\wedge \dots
    \wedge\Lambda_h^1(\pi)}_{q\text{ times}},\qquad
  \Lambda_v^p(\pi) = \underbrace{\Lambda_v^1(\pi)\wedge \dots
    \wedge\Lambda_v^1(\pi)}_{q\text{ times}}
\end{equation}
with the corresponding splitting of the de~Rham differential in the horizontal
and vertical (Cartan) parts:
\begin{equation}\label{eq:4}
  d_h = \sum_idx^i\wedge D_{x^i}, \qquad
  d_v = \sum_{j,\sigma} \omega_\sigma^j\wedge\frac{\partial}{\partial u_\sigma^j}.
\end{equation}

\subsection{Equations}\label{sec:equations}
Consider an $\mathcal{F}(\pi)$-module~$P$ of rank $r$ and its element
$F = (F^1,\dots,F^r)$. An (infinitely prolonged) partial differential equation
associated with~$F$ is
\begin{equation}\label{eq:10}
  \mathcal{E} = \sd{\theta\in J^\infty(\pi)}{\eval{D_\sigma(F^j)}_\theta = 0,
    \quad j=1,\dots,r,\ \abs{\sigma}\geq0},
\end{equation}
where~$D_\sigma$ denotes the composition of total derivatives corresponding to
the multi-index~$\sigma$. The restriction $\eval{\pi_\infty}_{\mathcal{E}}$
will be also denoted by~$\pi_\infty$. Solutions of~$\mathcal{E}$ are sections
of~$\pi$ such that the graphs of their infinite jets lie
in~$\mathcal{E}$. Functions and forms on~$\mathcal{E}$ are by definition
\begin{equation*}
  \mathcal{F}(\mathcal{E}) = \eval{\mathcal{F}(\pi)}_{\mathcal{E}}, \qquad
  \Lambda^i(\mathcal{E}) = \eval{\Lambda^i(\pi)}_{\mathcal{E}},
\end{equation*}
respectively. Vector fields on~$\mathcal{E}$ are derivations
$X\colon \mathcal{F}(\mathcal{E}) \to \mathcal{F}(\mathcal{E})$; the
$\mathcal{F}(\mathcal{E})$-module of these fields is denoted
by~$D(\mathcal{E})$.

The Cartan plane at~$\theta\in\mathcal{E}$ is defined as
$\mathcal{C}_\theta\cap T_\theta\mathcal{E}\subset T_\theta J^\infty(\pi)$. In
this way, we obtain an integrable distribution almost everywhere
on~$\mathcal{E}$. We also have splittings similar to~\eqref{eq:3}, as well as
the vertical and Cartan differentials defined exactly like in~\eqref{eq:4}.

It follows from~\eqref{eq:10} that the total derivatives can be restricted
to~$\mathcal{E}$. These restrictions will be also denoted by~$D_{x^i}$ and we
shall always assume that the only solutions of the system $D_{x^i}(f) = 0$, $i
= 1,\dots,n$, are constants. Such equations are called differentially
connected.

\subsection{Symmetries}
\label{sec:symmetries}

Everywhere below the word ``symmetry'' means an infinitesimal symmetry.

A symmetry of the Cartan distribution on~$J^\infty(\pi)$ is a a vector field
$X\in D^v(\pi)$ such that $[X, \mathcal{C}]\subset \mathcal{C}$. The set of
symmetries is a Lie algebra over~$\mathbb{R}$ denoted by~$\sym(\pi)$.
\begin{theorem}\label{thm:symmetries-1}
  There is a one-to-one correspondence between $\sym(\pi)$ and the module
  $\kappa = \Gamma(\pi_\infty^*(\pi))$\textup{,} where~$\pi_\infty^*(\pi)$
  denotes the pull-back. This correspondence is given by the formula
  \begin{equation*}
    \kappa\ni \phi\mapsto \Ev_\phi = \sum_{j,\sigma}
    D_\sigma(\phi^j)\frac{\partial}{\partial u_\sigma^j}\in D^v(\pi).
  \end{equation*}
\end{theorem}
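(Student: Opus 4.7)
\medskip

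\noindent\textbf{Proof plan.} The strategy is to reduce the geometric condition $[X,\mathcal{C}]\subset\mathcal{C}$ to an explicit coordinate recursion and then read off the generating section $\phi$. The map $\phi\mapsto\Ev_\phi$ is manifestly $\mathcal{F}(\pi)$-linear in $\phi$, so once bijectivity is proven, the identification of $\sym(\pi)$ with $\kappa$ follows.

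\medskip

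\noindent\textbf{Step 1: Reduce the Cartan-preserving condition.} Write a general vertical field as $X=\sum_{j,\sigma}X_\sigma^j\,\partial/\partial u_\sigma^j$ with $X_\sigma^j\in\mathcal{F}(\pi)$. Applying $[X,D_{x^i}]$ to any function of the $x^i$ alone gives zero, so $[X,D_{x^i}]$ is again $\pi_\infty$-vertical. On the other hand, the splitting $D(\pi)=\mathcal{C}\oplus D^v(\pi)$ forces $\mathcal{C}\cap D^v(\pi)=0$ (the Cartan distribution is horizontal, spanned by the $D_{x^i}$). Since the $D_{x^i}$ generate $\mathcal{C}$ as an $\mathcal{F}(\pi)$-module, the condition $[X,\mathcal{C}]\subset\mathcal{C}$ is equivalent to $[X,D_{x^i}]\in\mathcal{C}\cap D^v(\pi)=0$ for each $i=1,\dots,n$.

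\medskip

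\noindent\textbf{Step 2: The coordinate recursion.} Evaluating the commutator on a jet coordinate yields
\begin{equation*}
  [X,D_{x^i}](u_\tau^k) \;=\; X(u_{\tau i}^k) - D_{x^i}\bigl(X(u_\tau^k)\bigr) \;=\; X_{\tau i}^k - D_{x^i}(X_\tau^k),
\end{equation*}
so $[X,D_{x^i}]=0$ for all $i$ is exactly the recursion $X_{\tau i}^k=D_{x^i}(X_\tau^k)$ for every multi-index $\tau$ and every $k$, $i$.

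\medskip

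\noindent\textbf{Step 3: Integrate the recursion.} Setting $\phi^j:=X_\emptyset^j=X(u^j)$, a straightforward induction on $|\sigma|$ gives $X_\sigma^j=D_\sigma(\phi^j)$. Thus every symmetry has the form $X=\Ev_\phi$ with $\phi=(X(u^1),\dots,X(u^m))$, which, as a tuple of functions on $J^\infty(\pi)$, is precisely a section of the pull-back bundle $\pi_\infty^*(\pi)$, i.e.\ an element of~$\kappa$.

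\medskip

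\noindent\textbf{Step 4: Verify the converse and bijectivity.} Conversely, given any $\phi\in\kappa$, define $\Ev_\phi$ by the stated formula; the recursion $D_\sigma(\phi^j)_{,i}=D_{x^i}D_\sigma(\phi^j)$ (using commutativity of the $D_{x^i}$, which follows from $[D_{x^i},D_{x^l}]=0$) shows that $[\Ev_\phi,D_{x^i}]=0$, so $\Ev_\phi\in\sym(\pi)$. Injectivity of $\phi\mapsto\Ev_\phi$ is immediate from $\phi^j=\Ev_\phi(u^j)$, and surjectivity is exactly Step~3.

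\medskip

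The only real content is the reduction in Step~1 (the transversality of $\mathcal{C}$ and $D^v(\pi)$) together with the integration in Step~3; the rest is bookkeeping. No step requires any hard computation, and I anticipate no serious obstacle — the one mild subtlety is the justification that verticality of $[X,D_{x^i}]$ plus membership in $\mathcal{C}$ forces the bracket to vanish, which is where the geometry of the Cartan distribution enters essentially.
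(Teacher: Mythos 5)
Your proof is correct and is, in essence, the standard argument: the paper itself states Theorem~\ref{thm:symmetries-1} without proof, as classical background quoted from~\cite{AMS, Kra-Ver-Vit-Springer}, and your chain of reductions --- verticality of $[X,D_{x^i}]$, the transversality $\mathcal{C}\cap D^v(\pi)=0$ reducing the defining condition $[X,\mathcal{C}]\subset\mathcal{C}$ to $[X,D_{x^i}]=0$, and the integration of the recursion $X^k_{\tau i}=D_{x^i}(X^k_\tau)$ --- is exactly how this result is established in those sources. The only blemish is the side remark in your plan: the map $\phi\mapsto\Ev_\phi$ is \emph{not} $\mathcal{F}(\pi)$-linear (one has $\Ev_{a\phi}\neq a\,\Ev_\phi$, since $D_\sigma(a\phi^j)\neq a\,D_\sigma(\phi^j)$), only $\mathbb{R}$-linear, the $\mathcal{F}(\pi)$-module structure on $\sym(\pi)$ being \emph{defined} by transport along the bijection; but as the theorem asserts merely a one-to-one correspondence, this remark is unused in Steps~1--4 and harmless.
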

The field~$\Ev_\phi$ is said to be evolutionary; $\phi$ being generating
section. In what follows, we do not distinguish between evolutionary fields
and their generating sections when possible.

Since $\sym(\pi)$ is closed with respect to the commutator,
Theorem~\ref{thm:symmetries-1} implies that
\begin{equation*}
  [\Ev_\phi, \Ev_{\phi'}] = \Ev_{\{\phi, \phi'\}}
\end{equation*}
for some element $\{\phi, \phi'\}\in\kappa(\pi)$ which is called the Jacobi
bracket of~$\phi$ and~$\phi'$. To describe it explicitly, recall the following
fact:

\begin{proposition}\label{prop:symmetries-2}
  Let $\xi$ be a vector bundle over~$M$. Then the component-wise action
  \begin{equation*}
    \Ev_\phi^\xi\colon \Gamma(\pi_\infty^*(\xi))\to \Gamma(\pi_\infty^*(\xi))
  \end{equation*}
  is well defined and
  \begin{equation*}
    \Ev_\phi^\xi(a F) = \Ev_\phi(a)F + a\Ev_\phi^\xi(F) 
  \end{equation*}
  for all $a\in \mathcal{F}(\pi)$ and $F\in \Gamma(\pi_\infty^*(\xi))$. Then
  \begin{equation}  \label{eq:5}
    \{\phi, \phi'\} = \Ev_\phi^\pi(\phi') - \Ev_{\phi'}^\pi(\phi).
  \end{equation}
\end{proposition}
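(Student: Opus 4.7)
The plan is to verify the two halves of the proposition in sequence. For the first half, I would introduce a local trivialization of $\xi$ over an open subset of $M$, so that a section $F\in\Gamma(\pi_\infty^*(\xi))$ is represented by a tuple $(F^1,\dots,F^s)$ with $F^a\in\mathcal{F}(\pi)$. The natural component-wise candidate $\Ev_\phi^\xi(F):=(\Ev_\phi(F^1),\dots,\Ev_\phi(F^s))$ must be shown to transform correctly under a change of trivialization. The key observation is that transition matrices for $\pi_\infty^*(\xi)$ are pull-backs of transition matrices for $\xi$ on~$M$, hence depend only on the base variables $x^i$. Since $\Ev_\phi\in D^v(\pi)$ is $\pi_\infty$-vertical, it contains no $\partial/\partial x^i$ and therefore annihilates every such matrix entry. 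Consequently, if $F^a = g^a_b\,\tilde F^b$ on an overlap, then $\Ev_\phi(F^a) = g^a_b\,\Ev_\phi(\tilde F^b)$, so the component-wise rule produces a globally defined section of $\pi_\infty^*(\xi)$. The Leibniz identity is then immediate: at the level of components it reduces to the Leibniz rule for the derivation $\Ev_\phi$ applied to the scalar product $a F^a$.

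For the bracket formula, I would use that an evolutionary field $\Ev_\psi$ is uniquely determined by its action on the coordinate functions $u^j$, with $\Ev_\psi(u^j) = \psi^j$ returning the components of its generating section. Since $\sym(\pi)$ is closed under commutation (Theorem~\ref{thm:symmetries-1}), the commutator $[\Ev_\phi,\Ev_{\phi'}]$ equals some $\Ev_{\{\phi,\phi'\}}$, whose $j$-th component is recovered by evaluating the commutator on $u^j$. A direct calculation using $\Ev_\phi(u^j) = \phi^j$ and $\Ev_{\phi'}(u^j) = (\phi')^j$ gives
\[
[\Ev_\phi,\Ev_{\phi'}](u^j) = \Ev_\phi\bigl((\phi')^j\bigr) - \Ev_{\phi'}\bigl(\phi^j\bigr),
\]
and by the first half applied to $\xi=\pi$ this is precisely the $j$-th component of $\Ev_\phi^\pi(\phi') - \Ev_{\phi'}^\pi(\phi)$, establishing~\eqref{eq:5}.

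The only genuine subtlety sits in the well-definedness statement at the beginning, and it relies on the simple but essential fact that $\xi$ is a bundle over~$M$ rather than over $J^\infty(\pi)$, so that its transition data live in $C^\infty(M)\subset \mathcal{F}(\pi)$ and are killed by the vertical derivations~$\Ev_\phi$. Once this is in place, both the Leibniz rule and the bracket identity are one-line coordinate calculations, and I do not anticipate further obstacles.
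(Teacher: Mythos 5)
Your proof is correct. Note that the paper states Proposition~\ref{prop:symmetries-2} without proof, recalling it as standard background (cf.~\cite{AMS, VinKrasTrends}), so there is no in-text argument to diverge from; your reconstruction --- verticality of $\Ev_\phi$ annihilating the base-dependent transition data of $\pi_\infty^*(\xi)$ for well-definedness, then recovering the generating section of $[\Ev_\phi,\Ev_{\phi'}]$ by evaluation on the coordinate functions $u^j$ --- is exactly the standard argument those sources supply. One cosmetic point: closure of $\sym(\pi)$ under the commutator is not part of Theorem~\ref{thm:symmetries-1} itself (which is only the bijection $\phi\mapsto\Ev_\phi$) but follows from verticality of the commutator together with the Jacobi identity applied to $[X,\mathcal{C}]\subset\mathcal{C}$; the paper likewise takes this closure as given and then invokes Theorem~\ref{thm:symmetries-1} to write the commutator as $\Ev_{\{\phi,\phi'\}}$, so your parenthetical attribution should be adjusted but nothing in the argument changes.
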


Let now $\mathcal{E}\subset J^\infty(\pi)$ be an equation. Its symmetry is a symmetry of
the Cartan distribution on~$\mathcal{E}$. To describe the algebra
$\sym(\mathcal{E})$, let us, using Proposition~\ref{prop:symmetries-2}, define
the operator
\begin{equation*}
  \ell_F\colon \xi\to \Gamma(\pi_\infty^*(\xi)),\quad \ell_F(\phi) =
  \Ev_\phi^\xi(F).
\end{equation*}
In coordinates, one has
\begin{equation}
  \label{eq:6}
  \ell_F = 
  \begin{pmatrix}
    \sum_\sigma \dfrac{\partial F^\alpha}{\partial u_\sigma^\beta} D_\sigma
  \end{pmatrix}
\end{equation}

\begin{remark}
  An element $F\in \Gamma(\pi_\infty^*(\xi))$ is a nonlinear differential
  operator acting from sections of~$\pi$ to those of~$\xi$. Hence,~$\ell_F$ is
  its linearization. Note that~$\ell_F$ is an operator in total
  derivatives. Such operators (we call them $\mathcal{C}$-differential) admit
  restrictions to graphs of infinite jets and infinite prolongations.
\end{remark}

\begin{theorem}\label{thm:symmetries-3}
  Let $\mathcal{E}\subset J^\infty(\pi)$ be an equation associated with an
  element $F\in P = \Gamma(\pi_\infty^*(\xi))$ and such that
  $\pi_{\infty,0}(\mathcal{E}) = J^0(\pi)$. Then
  \begin{equation*}
    \sym(\mathcal{E}) = \ker\ell_{\mathcal{E}},
  \end{equation*}
  where $\ell_{\mathcal{E}}$ is the restriction of~$\ell_F$ to~$\mathcal{E}$.
\end{theorem}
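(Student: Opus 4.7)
The strategy is to combine the evolutionary correspondence of Theorem~\ref{thm:symmetries-1} on the ambient jet space with an explicit analysis of what it means for an evolutionary field to be tangent to~$\mathcal{E}$. First I would treat the inclusion $\ker\ell_{\mathcal{E}}\subset\sym(\mathcal{E})$. Given $\phi\in\kappa$ with $\ell_{\mathcal{E}}(\phi)=0$, the field $\Ev_\phi$ already preserves the Cartan distribution on~$J^\infty(\pi)$ by Theorem~\ref{thm:symmetries-1}, so the only thing left is to verify tangency to~$\mathcal{E}$. The submanifold~$\mathcal{E}$ is cut out by the ideal $I(\mathcal{E})\subset\mathcal{F}(\pi)$ generated by $\{D_\sigma F^j\}$, which is stable under every total derivative~$D_{x^i}$. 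Moreover, the explicit formula in Theorem~\ref{thm:symmetries-1} immediately yields $[\Ev_\phi,D_{x^i}]=0$, so
\[
  \Ev_\phi(D_\sigma F^j)=D_\sigma\bigl(\Ev_\phi(F^j)\bigr)=D_\sigma\bigl(\ell_F(\phi)^j\bigr),
\]
and the whole tower of tangency conditions collapses to the single requirement $\ell_F(\phi)|_{\mathcal{E}}=\ell_{\mathcal{E}}(\phi)=0$. Thus every such $\phi$ defines, by restriction of $\Ev_\phi$, a Cartan symmetry on~$\mathcal{E}$.

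The converse is the more substantive step. Given a symmetry $X$ of the Cartan distribution on~$\mathcal{E}$, I would define a candidate generating section by $\phi^j:=X(u^j)$, viewing the coordinates~$u^j$ as functions on~$\mathcal{E}$ via restriction from~$J^0(\pi)$. Here the hypothesis $\pi_{\infty,0}(\mathcal{E})=J^0(\pi)$ is crucial: it guarantees that the fibre coordinates on~$J^0(\pi)$ remain functionally independent on~$\mathcal{E}$, so that $\phi$ lifts unambiguously to an element of~$\kappa$. Invoking Cartan-preservation of~$X$ together with the characterization of evolutionary fields in Theorem~\ref{thm:symmetries-1}, one then verifies that $X$ agrees with the restriction of~$\Ev_\phi$ on each coordinate~$u_\sigma^j$, and hence on all of~$\mathcal{F}(\mathcal{E})$. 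Tangency of~$\Ev_\phi$ to~$\mathcal{E}$ then forces $\ell_{\mathcal{E}}(\phi)=0$ by the previous paragraph.

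The main obstacle is precisely this passage from an abstract ``internal'' symmetry of the Cartan distribution on~$\mathcal{E}$ to an ``external'' evolutionary field on the ambient~$J^\infty(\pi)$. Without the surjectivity hypothesis the equation could carry hidden zeroth-order constraints among the~$u^j$, and internal symmetries need not extend to external ones; the assumption $\pi_{\infty,0}(\mathcal{E})=J^0(\pi)$ excludes exactly this pathology. Granted it, what remains is the commutation $[\Ev_\phi,D_{x^i}]=0$ and routine bookkeeping already set up in Theorem~\ref{thm:symmetries-1} and the definition~\eqref{eq:6} of~$\ell_F$.
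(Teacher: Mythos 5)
The paper does not prove Theorem~\ref{thm:symmetries-3} at all: it is quoted as standard background, with Section~\ref{sec:basic-constr-notat} following \cite{AMS, Kra-Ver-Vit-Springer, VinKrasTrends}, and your argument is essentially the canonical proof found in those sources --- evolutionary fields satisfy $[\Ev_\phi, D_{x^i}]=0$, so tangency to the ideal generated by the prolongations $D_\sigma F^j$ collapses to $\ell_{\mathcal{E}}(\phi)=0$, while conversely $\phi^j:=X(u^j)$ recovers $X$ on all internal coordinates by induction from $[X,D_{x^i}]=0$, with $\pi_{\infty,0}(\mathcal{E})=J^0(\pi)$ ruling out zeroth-order constraints. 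Both directions are sound; the only points to read carefully are that elements of $\ker\ell_{\mathcal{E}}$ live on $\mathcal{E}$, so your ambient computation with $\Ev_\phi$ implicitly uses a local extension of $\phi$ (harmless, since the restriction of $D_\sigma(\phi^j)$ to $\mathcal{E}$ depends only on $\eval{\phi}_{\mathcal{E}}$), and that the vanishing of $[X,D_{x^i}]$ in the converse uses the verticality of $X$, which is built into the paper's definition of a symmetry.
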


\subsection{Conservation laws and cosymmetries}
\label{sec:cons-laws-cosymm}

A conservation law of~$\mathcal{E}$ is a $d_h$-closed differential form
$\omega \in \Lambda_h^{n-1}(\mathcal{E})$. It is trivial if $\omega =
d_h(\rho)$ for some $\rho\in \Lambda_h^{n-2}(\mathcal{E})$. Two conservation
laws are equivalent if they differ by a trivial one. In coordinates, if
\begin{equation*}
  \omega =
  \sum_{i=1}^na_i\,x^1\wedge\dots\wedge\,dx^{i-1}\wedge\,dx^{i+1}\wedge
  \dots\wedge\,dx^n,
\end{equation*}
then $\omega$ is a conservation law if and only if
\begin{equation}
  \label{eq:7}
  \sum_{i=1}^n (-1)^{i+1}D_{x^i}(a_i) =0.
\end{equation}
Direct computation of conservation laws using formula~\eqref{eq:7} is not
simple and is complicated by the existence of trivial laws, which are of no
interest. For the ``majority'' of equations, the procedure can be facilitated.

Let $\mathcal{E}$ be given by some~$F\in P$ and consider
$G\in Q = \Gamma(\pi_\infty^*(\eta))$, where~$\eta$ is another vector bundle
over~$M$. Assume~$\mathcal{E}$ to enjoy the following regularity property:
\begin{equation*}
  \eval{G}_{\mathcal{E}} = 0 \text{ implies } G = \Delta(F)
\end{equation*}
for some $\mathcal{C}$-differential operator $\Delta = \left(
\sum_\sigma a_{\beta,\sigma}^\alpha D_\sigma\right)\colon P\to Q$. For any such an
operator define its adjoint $\Delta^*\colon \hat{Q}\to \hat{P}$, where
$\hat{\bullet} = \hom_{\mathcal{F}(\mathcal{E})}(\bullet,
\Lambda_h^n(\mathcal{E}))$, by
\begin{equation*}
  \Delta^* = \left((-1)^{\abs{\sigma}} D_\sigma\circ
    a_{\alpha,\sigma}^\beta\right) 
\end{equation*}
and consider a form $\bar{\omega}\in \Lambda_h^n(\pi)$ such that
$\eval{\bar{\omega}}_{\mathcal{E}} = \omega$ for a conservation law~$\omega$
of the equation~$\mathcal{E}$. Then
\begin{equation}\label{eq:27}
  d_h\bar{\omega} = \Delta(F),\qquad \Delta\colon P\to\Lambda_h^n(\pi).
\end{equation}
Set $\psi_\omega = \eval{\Delta^*(1)}_{\mathcal{E}}$.
\begin{theorem}
  Let $\mathcal{E}$ be a $2$-line equation in the sense
  of~\cite{AMV-SecCal}. Then\textup{:}
  \begin{enumerate}
  \item The element $\psi_\omega\in\hat{P}$ is well defined\textup{,}
    i.e.\textup{,} does not depend on the choice of~$\bar{\omega}$.
  \item Conservation laws $\omega$ and $\omega'$ are equivalent if and only if
    $\psi_\omega = \psi_{\omega'}$. In particular\textup{,} $\omega$ is
    trivial if and only if~$\psi_\omega = 0$.
  \item The element $\psi_\omega$ enjoys the equation
    \begin{equation}
      \label{eq:8}
      \ell_{\mathcal{E}}^*(\psi_\omega) = 0,
    \end{equation}
    where $\ell_{\mathcal{E}}^*\colon \hat{P}\to \hat{\kappa}$.
  \end{enumerate}
\end{theorem}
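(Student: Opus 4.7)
The plan combines three standard ingredients: the regularity hypothesis (which converts vanishing on $\mathcal{E}$ into factorization through $F$ by a $\mathcal{C}$-differential operator), the Green's formula (which yields $\phi\cdot\Delta^*(1) - \Delta(\phi)\cdot 1 \in d_h\Lambda_h^{n-1}(\pi)$ for every $\phi\in P$), and the $2$-line hypothesis (which supplies vanishing of the relevant horizontal cohomology so that the ambiguities close up). Each of the three claims then reduces to a short computation.

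For part (1), let $\bar\omega_1,\bar\omega_2$ be two lifts of $\omega$. By regularity $\bar\omega_1 - \bar\omega_2 = \nabla(F)$ for some $\mathcal{C}$-differential $\nabla\colon P\to\Lambda_h^{n-1}(\pi)$; applying $d_h$ one may take $\Delta_1 = \Delta_2 + d_h\circ\nabla$. Taking adjoints and pairing with an arbitrary $\phi\in P$ gives $(d_h\circ\nabla)^*(1)(\phi) = d_h(\nabla(\phi))$, which is $d_h$-exact and therefore vanishes on $\mathcal{E}$ in horizontal cohomology; any residual ambiguity in $\Delta_i$ for a fixed $\bar\omega_i$ is absorbed by the $2$-line condition. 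The easy half of part (2) is the same calculation in disguise: if $\omega - \omega' = d_h\rho$ with lift $\bar\rho$, then $\bar\omega - d_h\bar\rho$ is a lift of $\omega'$ with the same $\Delta$, so $\psi_\omega = \psi_{\omega'}$.

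For part (3), apply the evolutionary derivation $\Ev_\phi$, $\phi\in\kappa$, to $d_h\bar\omega = \Delta(F)$. Since $\Ev_\phi$ commutes with $d_h$ and acts as a derivation on the coefficients of $\mathcal{C}$-differential operators,
$$d_h\bigl(\Ev_\phi(\bar\omega)\bigr) = \Ev_\phi(\Delta)(F) + \Delta(\ell_F(\phi)).$$
Restricting to $\mathcal{E}$ kills the $\Ev_\phi(\Delta)(F)$ term, so $\Delta(\ell_F(\phi))$ is $d_h$-exact on $\mathcal{E}$. Combined with the Green's formula this yields $\langle \ell_{\mathcal{E}}(\phi), \psi_\omega\rangle \equiv 0 \pmod{d_h}$ on $\mathcal{E}$ for every $\phi$, which is precisely $\ell_{\mathcal{E}}^*(\psi_\omega) = 0$ in $\hat\kappa$.

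The main obstacle is the converse in part (2): reconstructing a horizontal primitive $\rho$ with $\omega - \omega' = d_h\rho$ from the equality $\psi_\omega = \psi_{\omega'}$. This is where the $2$-line hypothesis is genuinely used — via the exactness it provides in the relevant row of the $\mathcal{C}$-spectral sequence of $\mathcal{E}$ — whereas parts (1), (3) and the easy direction of (2) are essentially bookkeeping with the Green's formula.
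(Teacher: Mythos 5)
A remark before the comparison: the paper offers \emph{no} proof of this theorem. It is stated in the background subsection on conservation laws and cosymmetries as a summary of Vinogradov's theory, with the proof residing in~\cite{AMV-SecCal}; so your attempt can only be measured against that standard argument, whose general route (regularity, Green's formula, the two-line condition) you have correctly identified. Within that route, your treatment of part~(3) and of the easy half of part~(2) is essentially the standard one.

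There are, however, two genuine gaps. First, in part~(1) you only dispose of the ambiguity in the lift~$\bar\omega$; the ambiguity in~$\Delta$ itself is untouched. Two operators with $\Delta(F)=\Delta'(F)=d_h\bar\omega$ differ by a $\mathcal{C}$-differential operator~$\nabla$ with $\nabla(F)=0$ identically, and such~$\nabla$ exist in abundance (e.g.\ $\nabla = (D_{x^i}F)\cdot{} - F\circ D_{x^i}$ in the scalar case), so one must actually prove $\eval{\nabla^*(1)}_{\mathcal{E}}=0$ for all of them. Your phrase ``absorbed by the $2$-line condition'' is an assertion, not an argument, and it sits uneasily with your closing claim that part~(1) is ``essentially bookkeeping'': for non-normal (gauge) systems, where Noether identities provide such~$\nabla$ with $\eval{\nabla^*(1)}_{\mathcal{E}}\neq 0$, the generating section is genuinely ill defined, so the two-line hypothesis is doing real work already in~(1). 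A smaller slip in the same step: $(d_h\circ\nabla)^*(1)=\nabla^*(d_h^*(1))=0$ holds on the nose, since the adjoint of~$d_h$ annihilates constants; your identity $(d_h\circ\nabla)^*(1)(\phi)=d_h(\nabla(\phi))$ is only valid modulo $\mathrm{im}\,d_h$, and even granting it you still need the substitution principle --- an element of~$\hat P$ all of whose pairings with~$P$ are $d_h$-exact is zero --- which you also use silently at the end of part~(3), there on~$\mathcal{E}$ rather than on~$J^\infty(\pi)$, where it depends on the differential connectedness of the equation. Second, the converse in part~(2), which is the heart of the theorem, is reduced to a pointer: to make it a proof one must identify~$\psi_\omega$ with the image of the class of~$\omega$ under $d_1\colon E_1^{0,n-1}(\mathcal{E})\to E_1^{1,n-1}(\mathcal{E})$ and then invoke the two-line theorem to obtain injectivity of this~$d_1$ on equivalence classes of conservation laws; neither the identification nor the injectivity is established in your write-up. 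You flag this honestly, but the net effect is that the proposal proves~(3) and half of~(2) in full, proves~(1) only modulo the very hypothesis it defers, and leaves the key injectivity statement to the cited theory --- exactly the part the paper itself delegates to~\cite{AMV-SecCal}.
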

We say that $\psi_\omega$ is the generating section of~$\omega$, while
elements $\psi\in\hat{P}$ satisfying Equation~\eqref{eq:8} are called
cosymmetries of~$\mathcal{E}$.

\subsection{Differential coverings}
\label{sec:diff-cover}

Let a smooth fiber bundle $\tau\colon \tilde{\mathcal{E}}\to \mathcal{E}$ be
such that the restriction
$\eval{d\tau}_{\tilde{\mathcal{C}}_{\tilde{\theta}}}$,
$\tilde{\theta}\in \tilde{\mathcal{E}}$, is an isomorphism
to~$\mathcal{C}_{\tau(\tilde{\theta})}$, where
$\tilde{\mathcal{C}}_{\tilde{\theta}}$
and~$\mathcal{C}_{\tau(\tilde{\theta})}$ are the Cartan planes at the
corresponding points. Such maps are called (differential) coverings.

Let $w^1,\dots, w^\alpha,\dots$ be coordinates in fibers of~$\tau$ in the
vicinity of some point~$\theta\in \mathcal{E}$ (nonlocal variables). Then the
condition that~$\tau$ is a covering means that the total derivatives
on~$\tilde{\mathcal{E}}$ are of the form $\tilde{D}_{x^i} = D_{x^i} + X_i$,
where
\begin{equation*}
  X_i = X_i^1\frac{\partial}{\partial w^1} + \dots +
  X_i^\alpha\frac{\partial}{\partial 
    w^\alpha} + \dots 
\end{equation*}
are $\tau$-vertical fields such that
\begin{equation}
  \label{eq:11}
  D_{x^i}(X_j) - D_{x^j}(X_i) + [X_i, X_j] = 0,\qquad i<j.
\end{equation}
This also means that~$\tilde{\mathcal{E}}$ may be understood as the
overdetermined system
\begin{equation}
  \label{eq:12}
  \frac{\partial w^\alpha}{\partial x^i} = X_i^\alpha,
\end{equation}
whose compatibility conditions coincide with~$\mathcal{E}$.

Two coverings $\tau_i\colon \tilde{\mathcal{E}}_i\to \mathcal{E}$, $i=1, 2$,
are equivalent if there exists an isomorphism of bundles
$f\colon \tilde{\mathcal{E}}_1 \to \tilde{\mathcal{E}}_2$ such that
$df(\mathcal{C}_\theta) = \mathcal{C}_{f(\theta)}$,
$\theta\in\tilde{\mathcal{E}}_1$. A covering is irreducible if the covering
equation is differentially connected and is trivial if it is equivalent to
$\tau_0\colon \mathcal{E}_0\times \mathcal{E}\to \mathcal{E}$ with
$\tilde{D}_{x^i} = D_{x^i}$. For any two coverings over~$\mathcal{E}$, the
Whitney product
$\tau_1\times \tau_2\colon \tilde{\mathcal{E}}_1\times_{\mathcal{E}}
\tilde{\mathcal{E}}_2\to \mathcal{E}$ is naturally endowed with the structure
of a covering and it can be shown that any covering is locally equivalent to
the Whitney product of a trivial and irreducible ones.

A covering $\tau$ given, we treat all the objects on~$\tilde{\mathcal{E}}$ to
be nonlocal with respect to~$\mathcal{E}$. In particular, a
symmetry~$\tilde{X}$ of~$\tilde{\mathcal{E}}$ is a nonlocal symmetry
of~$\mathcal{E}$. The defining equation for nonlocal symmetries is
$\ell_{\tilde{\mathcal{E}}}(\tilde{\phi}) = 0$. On the other hand, the
restriction
\begin{equation}\label{eq:13}
  X = \eval{\tilde{X}}_{\mathcal{F}(\mathcal{E})}\colon
  \mathcal{F}(\mathcal{E})\to \mathcal{F}(\tilde{\mathcal{E}})
\end{equation}
is an $\mathcal{F}(\tilde{\mathcal{E}})$-valued derivation that preserves the
Cartan distributions. Derivations of such a type are called shadows. We say
that a shadow lifts to~$\tilde{\mathcal{E}}$ if there exists an~$\tilde{X}$
such that~\eqref{eq:13} fulfils. Any shadow is an evolutionary derivation
on~$\mathcal{E}$ taking values in~$\mathcal{F}(\tilde{\mathcal{E}})$ with the
generating section~$\tilde{\phi}$ living on~$\tilde{\mathcal{E}}$ and
satisfying the defining equation
\begin{equation}
  \label{eq:14}
  \tilde{\ell}_{\mathcal{E}}({\tilde{\phi}}) = 0.
\end{equation}

\begin{remark}
  The notation~$\tilde{\ell}$ in~\eqref{eq:14} means the lift
  to~$\tilde{\mathcal{E}}$. Such a lift is possible for any
  $\mathcal{C}$-differential operator just by changing~$D_{x^i}$
  to~$\tilde{D}_{x^i}$.
\end{remark}

\subsection{B\"{a}cklund transformations}
\label{sec:backl-transf}

A B\"{a}cklund transformation $\mathcal{B}(\mathcal{E}_1, \mathcal{E}_2)$
between equations~$\mathcal{E}_1$ and~$\mathcal{E}_2$ is a diagram of the form
\begin{equation*} \xymatrix{
&\tilde{\mathcal{E}}\ar[dl]_-{\tau_1}\ar[dr]^-{\tau_2}&\\
\mathcal{E}_1&&\mathcal{E}_2\rlap{,} }
\end{equation*} where~$\tau_i$ are coverings. If the equation~$\mathcal{E}_i$,
$i = 1,2$, is imposed on unknowns~$u_i$ then $\mathcal{E}$ is an equation both
on~$u_1$ and~$u_2$ posesing the following characteristic property: if $(u_1,
u_2)$ solves~$\mathcal{E}$ and~$u_1$ solves~$\mathcal{E}_1$, then~$u_2$
solves~$\mathcal{E}_2$ and vice versa. When~$\mathcal{E}_1 = \mathcal{E}_2$,
$\mathcal{B}$ is called an auto-transformation.

Consider B\"{a}cklund transformations $\mathcal{B}_{12}(\mathcal{E}_1,
\mathcal{E}_2)$ and~$\mathcal{B}_{23}(\mathcal{E}_2, \mathcal{E}_3)$. Then the
diagram
\begin{equation*} \xymatrix{
&&\mathcal{E}_{12}\times_{\mathcal{E}_{2}}\ar[dl]_-{\tau_2^*(\tau_3)}
\ar[dr]^-{\tau_3^*(\tau_2)}\mathcal{E}_{23}&&\\
&\mathcal{E}_{12}\ar[dl]_-{\tau_1}\ar[dr]^-{\tau_2}
&&\mathcal{E}_{23}\ar[dl]_-{\tau_3}\ar[dr]^-{\tau_4}&\\
\mathcal{E}_{1}&&\mathcal{E}_{2}&&\mathcal{E}_{3} }
\end{equation*} provides a B\"{a}cklund transformation between~$\mathcal{E}_1$
and~$\mathcal{E}_3$, which is called the composition of~$\mathcal{B}_{12}$
and~$\mathcal{B}_{23}$.
\begin{remark} It may happen that the top equation is not differentially
closed. Then one should restrict the considerations onto irreducible leaves of
the Cartan distribution.
\end{remark}

\subsection{The tangent covering and recursion operators}
\label{sec:tangent-covering} Consider an equation $\mathcal{E}$ and its
tangent bundle $T\mathcal{E}\to \mathcal{E}$. Take the quotient bundle
\begin{equation}
  \label{eq:15} \mathbf{t}\colon \mathcal{TE} = T\mathcal{E}/\mathcal{C} \to
\mathcal{E}
\end{equation} and assume that its fibers are odd. Then~\eqref{eq:15} is
called the tangent covering to~$\mathcal{E}$ and~$\mathcal{TE}$ is called the
tangent equation. Locally, sections of~$\mathbf{t}$ may be understood as
$\pi_\infty$-vertical vector fields on~$\mathcal{E}$.
\begin{theorem}\label{thm:tangent-covering-1} The tangent covering posseses
the following properties\textup{:}
  \begin{enumerate}
  \item\label{thm:4-item:1} Sections of~$\mathbf{t}$ that preseve the Cartan
distributions are identified with symmetries of~$\mathcal{E}$.
  \item\label{thm:4-item:2} The superalgebra of functions on~$\mathcal{TE}$ is
canonically isomorphic to the Grassmann algebra~$\Lambda_v^*(\mathcal{E})$.
  \item If $\mathcal{E}$ is given by $F(u) = 0$\textup{,} then $\mathcal{TE}$
is given by the system $\{F(u) = 0, \ell_F(q) = 0\}$.
  \item\label{item:4-1} The Cartan differential defines an odd nilpotent
vector field~$\dc$ on~$\mathcal{TE}$\textup{,} such that $\dc(u_\sigma^j)
=q_\sigma^j$ and~$\dc(q_\sigma^j) =0$.
  \end{enumerate}
\end{theorem}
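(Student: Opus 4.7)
The plan is to derive all four items from a single observation: the Cartan distribution on~$\mathcal{E}$ admits a complement by $\pi_\infty$-vertical directions, so the quotient bundle $\mathbf{t}\colon\mathcal{TE}\to\mathcal{E}$ is canonically isomorphic to the bundle of $\pi_\infty$-vertical tangent vectors on~$\mathcal{E}$, and this isomorphism intertwines the Cartan, linearization, and differential-form structures introduced in Sections~\ref{sec:jets}--\ref{sec:equations}. Each of the four items then reduces to a more or less tautological identification.

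For item~(1), any section of~$\mathbf{t}$ has a unique $\pi_\infty$-vertical representative $X\in D^v(\mathcal{E})$, via the restriction to~$\mathcal{E}$ of the splitting $D(\pi)=\mathcal{C}\oplus D^v(\pi)$. The condition that this section preserves Cartan distributions translates into $[X,\mathcal{C}]\subset\mathcal{C}$, and Theorem~\ref{thm:symmetries-3} identifies such vector fields with $\ker\ell_{\mathcal{E}}=\sym(\mathcal{E})$. For item~(3), I introduce fiber coordinates $q^j_\sigma$ on~$\mathbf{t}$ dual to the classes of $\partial/\partial u^j_\sigma$ in $T_\theta\mathcal{E}/\mathcal{C}_\theta$. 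The defining relations of $\mathcal{E}\subset J^\infty(\pi)$ are $D_\tau F^\alpha=0$; differentiating them along a general vertical tangent vector $\sum_{j,\sigma}q^j_\sigma\,\partial/\partial u^j_\sigma$ and invoking the coordinate formula~\eqref{eq:6} produces exactly $D_\tau(\ell_F(q))^\alpha=0$. Hence the total space is cut out by $F=0$ together with $\ell_F(q)=0$ and all their prolongations.

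Items~(2) and~(4) are dual identifications. Since the fibers of~$\mathbf{t}$ are declared odd, the superalgebra of functions on the total space is, by the standard convention for super-vector bundles, the exterior algebra on sections of the dual bundle~$\mathbf{t}^*$. But the annihilator of the Cartan distribution inside $T^*\mathcal{E}$ is precisely $\Lambda_v^1(\mathcal{E})$ (spanned by the Cartan forms $\omega^j_\sigma$), whence $\mathcal{F}(\mathcal{TE})\cong\Lambda_v^*(\mathcal{E})$ canonically. Under this isomorphism the vertical de~Rham differential~$d_v$ from~\eqref{eq:4} becomes a graded derivation of $\mathcal{F}(\mathcal{TE})$, i.e., an odd vector field~$\dc$; its nilpotency follows from $d_v^2=0$, itself obtained by extracting the purely vertical bidegree component of $d^2=0$. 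The coordinate formulas come from $d_v u^j_\sigma=\omega^j_\sigma$, which corresponds to the coordinate $q^j_\sigma$, and from $d\omega^j_\sigma=-\sum_i\omega^j_{\sigma i}\wedge dx^i$ having bidegree~$(1,1)$, so that its purely vertical $(2,0)$-component $d_v\omega^j_\sigma$ vanishes, giving $\dc(q^j_\sigma)=0$.

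The main obstacle will be item~(3): one has to verify not merely that $\ell_F(q)=0$ holds on $\mathcal{TE}$, but also that no extra independent relations on the coordinates~$q^j_\sigma$ appear after passing to the quotient by~$\mathcal{C}$. This amounts to checking that the prolonged constraints $D_\tau F^\alpha=0$ contribute only the prolongations $D_\tau(\ell_F(q))^\alpha=0$ on the vertical side, which in turn relies on the commutation of total derivatives with the differentiation $q^j_\sigma\mapsto D_\sigma(q^j)$ implicit in the structure of $\ell_F$. The remaining items are essentially exercises in unwinding the splittings of Section~\ref{sec:jets}.
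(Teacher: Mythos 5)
The paper states Theorem~\ref{thm:tangent-covering-1} as recalled background and contains no proof of its own (it refers the reader to the literature on tangent coverings), so there is nothing to diverge from; your argument is a correct reconstruction of the standard one. The four identifications you use --- unique vertical representatives of sections of~$\mathbf{t}$, the duality $(T\mathcal{E}/\mathcal{C})^*\cong\Lambda_v^1(\mathcal{E})$ giving $\mathcal{F}(\mathcal{TE})\cong\Lambda_v^*(\mathcal{E})$, the tangency conditions prolonging to $\ell_F(q)=0$ via the identity $\ell_{D_\sigma F}=D_\sigma\circ\ell_F$ (i.e., $[\Ev_q,D_{x^i}]=0$), and $\dc$ as the image of $d_v$ under the isomorphism of item~(2), which automatically makes $\dc$ a well-defined nilpotent field on~$\mathcal{TE}$ --- are precisely those the paper relies on later, e.g., $\dc(u_\sigma)=q_\sigma$, $\dc(q_\sigma)=0$ and the anticommutation of odd fields in the proof of Proposition~\ref{prop:shadows-lifts-1}, so your proposal is fully consistent with the paper's usage.
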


Statement~(\ref{thm:4-item:2}) of Theorem~\ref{thm:tangent-covering-1} implies
that fiber-wise linear functions on~$\mathcal{TE}$ are identified with
$\mathcal{C}$-differential operators $\kappa \to \mathcal{F}(\mathcal{E})$:
\begin{equation*} \phi\mapsto i_{\Ev_\phi}(\Upsilon),\qquad \Upsilon= \sum
a_\sigma^jq_\sigma^j,
\end{equation*} where $i$ denotes the inner product. It also follows from
Statement~(\ref{thm:4-item:1}) that B\"{a}cklund auto-transformations
of~$\mathcal{TE}$ relate symmetries of~$\mathcal{E}$ to each other, i.e., are
interpreted as recursion operators. Below, we construct these operators as
follows.

Choose internal coordinated~$x^i$, $u_\sigma^j$ in~$\mathcal{E}$ and
let~$q_\sigma^j$ be the corresponding coordinates in fibers of~$\mathbf{t}$.
Let also
\begin{multline*} \omega^\alpha = (X_{i_1}^\alpha\,dx^{i_1} +
X_{i_2}^\alpha\,dx^{i_2})\wedge\,dx^1\wedge \dots\wedge
\,dx^{i_1-1}\wedge\,dx^{i_1+1}\wedge\dots\\ \dots\wedge
\,dx^{i_2-1}\wedge\,dx^{i_2+1}\wedge\dots\wedge\,dx^n,\qquad \alpha =
1,\dots,s,
\end{multline*} be two-component conservation laws of~$\mathcal{TE}$ linear in
the variables~$q_\rho^j$. Then the system of relations
\begin{equation}\label{eq:16}
  \begin{array}{lcll} w_{\rho, x^l}^\alpha
    &=&
        D_\rho(X_l^\alpha),&\quad l = i_1,i_2,\\
    w_{\rho,x^l}^\alpha&= &w_{\rho l}^\alpha,&\quad l\neq i_1, i_2,
  \end{array}
\end{equation} defines a covering~$\tau^{(\omega)}\colon W^{(\omega)}\to
\mathcal{TE}$ over~$\mathcal{TE}$. Here~$\rho$ is a symmetric multi-index that
does not contain~$i_1$, $i_2$ and the nonlocal variables~$w_\rho^\alpha$ are
odd; in the case~$n = 2$ the second group of relations is void.

Let now
\begin{equation*}
  \phi^j = \sum_{\sigma,\beta} a_{\sigma,\beta}^j
  q_\sigma^\beta + \sum_{\rho,\alpha} b_{\rho,\alpha}^j w_\rho^\alpha, \quad j =
  1, \dots,m,
\end{equation*} be a shadow in $\tau^{(\omega)}$. Consider a second copy
of~$\mathcal{TE}$ with fiber-wise coordinates~$\bar{q}_\sigma^j$ and the map
$\tau_\phi\colon W^{(\omega)} \to \mathcal{TE}$ given by
\begin{equation}
  \label{eq:18} \bar{q}_\sigma^j = D_\sigma(\phi^i).
\end{equation} This map provides another covering $\tau^\phi\colon
W^{(\omega)}\to \mathcal{TE}$ and the resulting B\"{a}ck\-lund
transformation~$\mathcal{R}$, described by Equations~\eqref{eq:16}
and~\eqref{eq:18}, and is the desired recursion operator: substituting a known
symmetry instead of~$q$ and solving the system with respect to~$\bar{q}$, we
get the action.

\begin{remark} A ``tradition'' prescribes to eliminate nonlocal variables
from~\eqref{eq:16} and~\eqref{eq:18} and present a recursion operator as a
$\mathcal{C}$-differential relation between~$q$ and~$\bar{q}$. This may
convenient in simple cases, but in more complicated ones leads to practically
unreadable formulas (see Appendix~\ref{sec:conv-pres}).
\end{remark}

Let~$\mathcal{R}$ be a recursion operator and~$\phi$ be the corresponding
shadow. Denote by~$\bar{\phi}$ the lift of the latter to~$W^{(\omega)}$ if it
exists. Given another liftable shadow~$\phi'$, we have $[\Ev_{\bar{\phi}},
\Ev_{\bar{\phi}'}] = \Ev_{\ldb \bar{\phi}, \bar{\phi}'\rdb}$ for some nonlocal
symmetry~$\ldb \bar{\phi}, \bar{\phi}'\rdb$ of parity~$2$ (since
$\Ev_{\bar{\phi}}$ and~$\Ev_{\bar{\phi}'}$ are odd vector fields of
parity~$1$, their commutator is $\Ev_{\bar{\phi}}\circ\Ev_{\bar{\phi}'} +
\Ev_{\bar{\phi}'}\circ\Ev_{\bar{\phi}}$). We say that $\ldb \bar{\phi},
\bar{\phi}'\rdb$ is the Fr\"{o}licher-Nijenhuis bracket of~$\phi$
and~$\bar{\phi}$. A recursion operator is said to be hereditary if $\ldb
\bar{\phi}, \bar{\phi}\rdb = 0$; two operators are compatible if~$\ldb
\bar{\phi}, \bar{\phi}'\rdb = 0$.

\section{The equation and its symmetries}
\label{sec:equat-its-symm}

The two-component Pavlov-Mikhalev equation~$\mathcal{E}$ reads
\begin{equation}
  \label{eq:19}
  \begin{array}{rcl}
    u_{yy} &=& u_{xt} - u_x u_{xy} + u_yu_{xx},\\
    v_{yy} &=& v_{xt} - u_x v_{xy} - 2 v_y u_{xx} + 2 v_x u_{xy} + u_y v_{xx}.
  \end{array}
\end{equation}
We choose the functions
\begin{align*}
  &u_{0,k,l} = u_{\underbrace{\scriptstyle{x\dots x}}_{k\text{
  times}}\underbrace{\scriptstyle{t\dots 
  t}}_{\text{l times}}}, &&u_{1,k,l} = u_{y\underbrace{\scriptstyle{x\dots
  x}}_{k\text{ times}}\underbrace{\scriptstyle{t\dots t}}_{\text{l times}}},\\ 
  &v_{0,k,l} = v_{\underbrace{\scriptstyle{x\dots x}}_{k\text{
  times}}\underbrace{\scriptstyle{t\dots 
  t}}_{\text{l times}}}, &&v_{1,k,l} = v_{y\underbrace{\scriptstyle{x\dots
  x}}_{k\text{ times}}\underbrace{\scriptstyle{t\dots t}}_{\text{l times}}}
\end{align*}
for intenal coordinates on~$\mathcal{E}$. The total derivatives acquire the
form
\begin{align*}
  D_x&=\sum_{k,l\geq0}\left(u_{0,k+1,l}\frac{\partial}{u_{0,k,l}} +
       u_{1,k+1,l}\frac{\partial}{u_{1,k,l}} +
       v_{0,k+1,l}\frac{\partial}{v_{0,k,l}} +
       v_{1,k+1,l}\frac{\partial}{v_{1,k,l}}\right),\\ 
  D_t&=\sum_{k,l\geq0}\left(u_{0,k,l+1}\frac{\partial}{u_{0,k,l}} +
       u_{1,k,l+1}\frac{\partial}{u_{1,k,l}} +
       v_{0,k,l+1}\frac{\partial}{v_{0,k,l}} +
       v_{1,k,l+1}\frac{\partial}{v_{1,k,l}}\right),
  \intertext{and}
  D_y&=\sum_{k,l\geq0}\left(u_{1,k,l}\frac{\partial}{u_{0,k,l}} +
       D_x^kD_t^l(U)\frac{\partial}{u_{1,k,l}} +
       v_{1,k,l}\frac{\partial}{v_{0,k,l}} +
       D_x^kD_t^l(V)\frac{\partial}{v_{1,k,l}}\right)
\end{align*}
in these coordinates, where $U$ and $V$ are the right-hand sides
of~\eqref{eq:19}.

Equation~\eqref{eq:14} acquires the form
\begin{equation}\label{eq:20}
    \begin{array}{rcl}
    D_y^2(\phi^u)
    & =& D_xD_t(\phi^u) + u_{xx}D_y(\phi^u)  + u_yD_x^2(\phi^u) -
         u_xD_xD_y(\phi^u) \\
    &-& u_{xy}D_x(\phi^u),\\[4pt]
    D_y^2(\phi^v)
    &=&   2v_xD_xD_y(\phi^u) - v_{xy}D_x(\phi^u) + v_{xx}D_y(\phi^u) \\ 
    &-&2v_yD_x^2(\phi^u) + D_xD_t(\phi^v)
        + 2u_{xy}D_x(\phi^v) - 2u_{xx}D_y(\phi^v)  \\
    &+&  u_yD_x^2(\phi^v) -  u_xD_xD_y(\phi^v). 
  \end{array}
\end{equation}
Solving this system, one sees that its space of solutions $\sym(\mathcal{E})$
is generated over~$\mathbb{R}$ by the functions\footnote{Explicit presentation
  of the generators see in Appendix~\ref{sec:expl-form-symm}.}
\begin{equation*}
  \phi_1,\ \phi_2,\ \phi_3[\vartheta],\dots, \phi_6[\vartheta],\
  \phi_7,\dots,\phi_{14},\ 
  \phi_{15}[\vartheta], \dots,\phi_{18}[\vartheta],
\end{equation*}
where~$\phi_i = (\phi_i^u, \phi_i^v)$ and $\vartheta = \vartheta(t)$ is an
arbitrary smooth function in~$t$. The Lie algebra structure
of~$\sym(\mathcal{E})$ is described~in

\begin{proposition}
  \label{prop:equat-its-symm-1}
  Let
  \begin{align*}
    \mathfrak{s}(2)
    &= \langle\phi_1,\phi_2\rangle,\\
    \mathfrak{a}(1)
    &= \langle\phi_7\rangle,\\
    \mathfrak{a}(7)
    &= \langle\phi_8,\dots,\phi_{14}\rangle,\\
    \mathfrak{at}(4)
    &=
      \langle\phi_{15}[\vartheta],\dots,\phi_{18}[\vartheta]\rangle,\\ 
    \mathfrak{gt}(4)
    &=\langle\phi_3[\vartheta],\dots,\phi_6[\vartheta]\rangle.
  \end{align*}
  Then
  \begin{equation*}
    \sym(\mathcal{E}) = \big(\mathfrak{s}(2)\oplus\mathfrak{a}(1)\big) \ominus
    \mathfrak{gt}(4) \ominus \big(\mathfrak{at}(4)\oplus\mathfrak{a}(7)\big),
  \end{equation*}
  where $\ominus$ denotes a semidirect product. The first summand is the
  $2$-dimensional solvable Lie algebra with the commutator $\{\phi_2, \phi_1\}
  = \phi_1$\textup{,} the algebras~$\mathfrak{a}(1)$\textup{,}
  $\mathfrak{at}(4)$ and $\mathfrak{a}(7)$ are Abelian. The structure of
  $\mathfrak{gt}(4)$ is given by
  \begin{gather*}
    \{\phi_3[\vartheta], \phi_6[\bar{\vartheta}]\} = \phi_3[\vartheta_t
    \bar{\vartheta} - \vartheta\bar{\vartheta}_t],\quad \{\phi_4[\vartheta],
    \phi_5[\bar{\vartheta}]\} = \phi_3[\vartheta\bar{\vartheta}_t -
    \vartheta_1 \bar{\vartheta}],\\
    \{\phi_4[\vartheta], \phi_6[\bar{\vartheta}]\} = -\phi_4[\vartheta
    \bar{\vartheta}_t + \vartheta_t\bar{\vartheta}], \quad
    \{\phi_5[\vartheta], \phi_5[\bar{\vartheta}]\} =
    \phi_5[\bar{\vartheta}\vartheta_t - \bar{\vartheta}_t\vartheta],\\
    \{\phi_5[\vartheta], \phi_6[\bar{\vartheta}]\} =
    -\phi_5[\bar{\vartheta}\vartheta_t + \bar{\vartheta}_t\vartheta], \quad
    \{\phi_6(\vartheta), \phi_6[\bar{\vartheta}]\} =
    \phi_6[\bar{\vartheta}\vartheta_t - \bar{\vartheta}_t\vartheta]. 
  \end{gather*}
  The actions of $\mathfrak{s}(2)$ and $\mathfrak{a}(1)$ are
  \begin{gather*}
    \{\phi_1,\phi_4[\vartheta]\} = -2\phi_3[\vartheta], \quad
    \{\phi_1,\phi_5[\vartheta]\} = \phi_4[\vartheta], \quad \{\phi_1,\phi_9\}
    = -\phi_8,\\\
    \{\phi_1,\phi_{10}\} = -2\phi_9, \quad \{\phi_1,\phi_{11}\} = -3\phi_{10},
    \quad \{\phi_1,\phi_{12}\} = -4\phi_{11}, \\
    \{\phi_1,\phi_{13}\} = -5\phi_{12},\quad \{\phi_1,\phi_{14}\} =
    -6\phi_{13}, \quad \{\phi_1,\phi_{16}[\vartheta]\} =
    -2\phi_{15}[\vartheta],\\
    \{\phi_1,\phi_{17}[\vartheta]\} = -5\phi_{16}[\vartheta],\quad
    \{\phi_1,\phi_{18}[\vartheta]\} = -6\phi_{17}[\vartheta],\\[4pt]
    \{\phi_2,\phi_3[\vartheta]\} = 3\phi_3[\vartheta],\quad
    \{\phi_2,\phi_4[\vartheta]\} = 2\phi_4[\vartheta], \quad
    \{\phi_2,\phi_5[\vartheta]\} = \phi_5[\vartheta],\\
    \{\phi_2,\phi_9\} = 2\phi_9,,\quad \{\phi_2,\phi_{10}\} = \phi_{10}, \quad
    \{\phi_2,\phi_{12}\} = -\phi_{12},\\
    \{\phi_2,\phi_{13}\} = -2\phi_{13},\quad \{\phi_2,\phi_{14}\} =
    -3\phi_{14},  \quad\{\phi_2,\phi_{16}[\vartheta]\} = -\phi_{16}[\vartheta],\\
    \{\phi_2,\phi_{17}[\vartheta]\} = -2\phi_{17}[\vartheta],\quad
    \{\phi_2,\phi_{18}[\vartheta]\} = -3\phi_{18}[\vartheta]
  \end{gather*}
  and $\{\phi_7,\phi_i\} = - \phi_i$ for all $i =
  8,\dots,18$. Finally\textup{,} one has the following action
  of~$\mathfrak{gt}(4)$ on $\mathfrak{a}(7)\oplus \mathfrak{at}(4)$\textup{:}
  \begin{gather*}
    \{\phi_3[\vartheta],\phi_{14}\} = \phi_{15}[\vartheta_{ttt}], \quad
    \{\phi_3[\vartheta],\phi_{18}[\bar{\vartheta}]\} = \phi_{15}[\vartheta
    \bar{\vartheta}_t/2 + \vartheta_t\bar{\vartheta}],\\
     \{\phi_4[\vartheta],\phi_{13}\} = -\phi_{15}[\vartheta_{ttt}], \quad
     \{\phi_4[\vartheta],\phi_{14}\} = -2\phi_{16}[\vartheta_{ttt}], \\
     \{\phi_4[\vartheta],\phi_{17}[\bar{\vartheta}]\} =
     -\phi_{15}[\vartheta_{t} \bar{\vartheta} +
     \vartheta\bar{\vartheta}_t/2]\\
     \{\phi_5[\vartheta],\phi_{12}\} = -\phi_{15}[\vartheta_{ttt}], \quad
     \{\phi_5[\vartheta],\phi_{13}\} = -2\phi_{16}[\vartheta_{ttt}], \\
     \{\phi_5[\vartheta],\phi_{14}\} = -2\phi_{17}[\vartheta_{ttt}],\quad
     \{\phi_5[\vartheta],\phi_{16}[\bar{\vartheta}]\} = -\phi_{15}[\vartheta_t
     \bar{\vartheta} + \vartheta\bar{\vartheta}_t/2],\\
     \{\phi_5[\vartheta],\phi_{17}[\bar{\vartheta}]\} =
     -\phi_{17}[2\vartheta_t \bar{\vartheta} + \vartheta\bar{\vartheta}_t],
     \quad \{\phi_5[\vartheta],\phi_{18}[\bar{\vartheta}]\} =
     -\phi_{16}[2\vartheta_t \bar{\vartheta} + \vartheta\bar{\vartheta}_t],\\
      \{\phi_6[\vartheta], \phi_{11}\} = -\phi_{15}[\vartheta_{ttt}],\quad
      \{\phi_6[\vartheta], \phi_{12}\} = -2\phi_{16}[\vartheta_{ttt}],\\
      \{\phi_6[\vartheta], \phi_{13}\} = -2\phi_{17}[\vartheta_{ttt}], \quad
      \{\phi_6[\vartheta], \phi_{14}\} = -2\phi_{18}[\vartheta_{ttt}],\\
      \{\phi_6[\vartheta], \phi_{15}[\bar{\vartheta}]\} =
      -\phi_{15}[2\vartheta_t \bar{\vartheta} +
      \vartheta\bar{\vartheta}_t],\quad \{\phi_6[\vartheta],
      \phi_{16}[\bar{\vartheta}]\} = -\phi_{16}[2\vartheta_t \bar{\vartheta} +
      \vartheta\bar{\vartheta}_t],\\
      \{\phi_6[\vartheta], \phi_{17}[\bar{\vartheta}]\} =
      -\phi_{17}[2\vartheta_t \bar{\vartheta} +
      \vartheta\bar{\vartheta}_t],\quad \{\phi_6[\vartheta],
      \phi_{18}[\bar{\vartheta}]\} = -\phi_{18}[2\vartheta_t \bar{\vartheta} +
      \vartheta\bar{\vartheta}_t].
  \end{gather*}
  All the rest brackets vanish.
\end{proposition}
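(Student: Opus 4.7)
My plan is to verify the commutator table by direct computation of the Jacobi bracket~\eqref{eq:5} on the explicit generators produced in Appendix~\ref{sec:expl-form-symm}. The generators split into two classes: ``ordinary'' ones ($\phi_1,\phi_2,\phi_7,\ldots,\phi_{14}$), whose components are specific polynomial expressions in the jet variables, and ``parametric'' ones ($\phi_3[\vartheta],\ldots,\phi_6[\vartheta]$ and $\phi_{15}[\vartheta],\ldots,\phi_{18}[\vartheta]$), which depend on an arbitrary smooth function $\vartheta(t)$. Since $\vartheta$ depends only on $t$, prolongations of evolutionary fields treat $\vartheta$ and its $t$-derivatives as inert parameters while acting on their coefficient multipliers via the total derivatives $D_x$, $D_y$, $D_t$. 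Fixing this convention at the outset is essential for the bookkeeping that follows.

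With conventions fixed, I would proceed systematically through the roughly $\binom{18}{2}$ pairs, organized by the five candidate summands $\mathfrak{s}(2)$, $\mathfrak{a}(1)$, $\mathfrak{gt}(4)$, $\mathfrak{at}(4)$, $\mathfrak{a}(7)$. Within each subalgebra I first check the internal structure: the single relation $\{\phi_2,\phi_1\} = \phi_1$ in $\mathfrak{s}(2)$; the vanishing of all internal brackets in $\mathfrak{a}(1)$, $\mathfrak{a}(7)$, and $\mathfrak{at}(4)$; and the six listed nontrivial brackets in $\mathfrak{gt}(4)$. Cross-brackets are then computed in order of increasing complexity: the diagonal actions of $\phi_1$, $\phi_2$, and $\phi_7$ on the remaining summands, and finally the action of $\mathfrak{gt}(4)$ on $\mathfrak{at}(4)\oplus\mathfrak{a}(7)$, which produces the most intricate formulas involving Leibniz combinations of $\vartheta$, $\bar{\vartheta}$, and their $t$-derivatives.

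Once the full bracket table is in hand, the semidirect product decomposition follows by reading off which summands map into which. The innermost ideal $\mathfrak{at}(4)\oplus\mathfrak{a}(7)$ is picked out by the fact that its internal brackets vanish and all brackets of elements of the outer summands with it land in it (or are zero); identifying $\mathfrak{gt}(4)\ominus(\mathfrak{at}(4)\oplus\mathfrak{a}(7))$ as a further ideal is an analogous reading; the remaining complement is precisely $\mathfrak{s}(2)\oplus\mathfrak{a}(1)$ with the advertised non-trivial bracket. At this stage one must also verify the Jacobi identity on a few triples, but this is automatic since the $\phi_i$ already arise as evolutionary fields on $\mathcal{E}$, whose commutator algebra is Jacobi by construction.

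The main obstacle is purely computational volume rather than anything conceptual: with eighteen generators of sometimes lengthy jet-polynomial type, hand evaluation is impractical, so I would delegate the raw evaluation of $\Ev_\phi(\phi')-\Ev_{\phi'}(\phi)$ to a symbolic system such as \textsf{Jets}, then verify by hand that each computed bracket matches the claimed element of the appropriate summand, \emph{including} the correct function-valued coefficient (e.g.\ $\vartheta\bar{\vartheta}_t-\vartheta_t\bar{\vartheta}$ rather than merely ``some bilinear expression in $\vartheta,\bar{\vartheta}$''). A secondary subtlety is that the parametric generators are defined only up to reparametrization $\vartheta\mapsto c\vartheta$ and up to addition of lower-order generators; machine output may differ from the table by conventional sign or scale factors, and these must be absorbed into the normalization of generators fixed in Appendix~\ref{sec:expl-form-symm} before the brackets match on the nose.
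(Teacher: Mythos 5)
Your proposal coincides with the paper's (implicit) proof: the proposition is established there exactly by direct evaluation of the Jacobi bracket~\eqref{eq:5} on the explicit generators listed in Appendix~\ref{sec:expl-form-symm}, with the raw symbolic work delegated to the \textsc{Jets} system and the semidirect-product decomposition then read off from the resulting bracket table. Your additional bookkeeping remarks---treating $\vartheta(t)$ and its $t$-derivatives as inert parameters under prolongation, and absorbing conventional sign/scale factors into the normalization of the generators---are precisely the conventions the paper's computation relies on, so there is no gap.
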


\section{The tangent equation: conservation laws and coverings}
\label{sec:cons-laws-cover}

The tangent covering is obtained by adding the equations
\begin{equation}\label{eq:28}
  \begin{array}{rcl}
  p_{yy} &=& u_yp_{xx} - u_{xy}p_x - u_xp_{xy} + u_{xx}p_y + p_{xt},\\
  q_{yy} &=& 2v_xp_{xy} + v_{xx}p_y - v_{xy}p_x - 2v_yp_{xx} + 2u_{xy}q_x \\
         &+&   u_yq_{xx} - u_xq_{xy} - 2u_{xx}q_y + q_{xt}
  \end{array}
\end{equation}
to the initial system, where $p$ and $q$ are odd coordinate functions in the
fibers of the covering. We were looking for two-component conservation laws of
second order in all jet variables\footnote{An attempt to raise the order makes
  computations unrealistically time-consuming.} and linear in~$p_\sigma$
and~$q_\sigma$. This led us to four conservation laws
$\omega_i = (X_i\,dx + Y_i\,dy)\wedge\, dt$, $i = 1,\dots,4$, with the
following components:
\begin{align*}
  X_1 &= 2u_xp_x  + p_y,\\
  Y_1 &= u_yp_x + u_xp_y + p_t,\\[4pt]
  X_2 &= q_y - v_xp_x - u_xq_x ,\\
  Y_2 &= q_t -2v_yp_x + v_xp_y + u_yq_x - 2u_xq_y,\\[4pt]
  X_3 &= \left(\frac{3}{2}u_x^2 - yu_{xt}\right) p_x + \left(yu_{xx} +
        \frac{1}{2}\right)p_t - u_{xy}p + u_xp_y - \frac{1}{2}yp_{yt},\\ 
  Y_3 &= (u_xu_{xy} - u_yu_{xx} - u_{xt})p + \left(u_xu_y -
        \frac{1}{2}yu_{yt}\right)p_x + \frac{1}{2}(u_x^2 - yu_{xt})p_y \\
      &+ (yu_{xy} + u_x)p_t - \frac{1}{2}y(u_yp_{xt} - u_xp_{yt} + p_{tt}),\\
  X_4&=  y(v_{xx}p_t  - v_{xt}p_x + u_{xx}q_t - u_{xt}q_x + q_{yt}) - v_{xy}p
       + v_xp_y - u_{xy}q + u_xq_y - q_t,\\ 
  Y_4 &= (u_xv_{xy} + 2v_yu_{xx} - 2v_xu_{xy} - u_yv_{xx} - v_{xt}) p
        + (u_y v_x - 2u_xv_y- 2yv_{yt})p_x \\
  &+ (u_xv_x + yv_{xt}) p_y + (yv_{xy} + v_x)p_t + (u_xu_{xy} - u_yu_{xx} -
    u_{xt})q \\
      &+ (u_xu_y + yu_{yt})q_x - (u_x^2 + 2yu_{xt})q_y + (yu_{xy} + u_x)q_t\\
  &+ y(2v_xp_{yt} - 2v_yp_{xt} + u_yq_{xt} - u_xq_{yt} + q_{tt}).
\end{align*}

\begin{remark}
  Actually, all the coefficients may be multiplied by an arbitrary
  function~$f(t)$, but this does not influence the final results.
\end{remark}

The generating sections of these conservation laws are
\begin{equation*}
  \psi_1 =
  \begin{pmatrix}
    0\\ 0\\ 1\\ 0
  \end{pmatrix},\quad \psi_2 =
  \begin{pmatrix}
    0\\ 0\\ 0\\ 1
  \end{pmatrix},\quad \psi_3 =
  \begin{pmatrix}
    p_x\\ 0\\ u_x\\ 0
  \end{pmatrix},\quad \psi_4 =
  \begin{pmatrix}
    q_x\\ p_x\\ v_x\\ u_x
  \end{pmatrix},
\end{equation*}
so they are all nontrivial.

We shall consider below the coverings
\begin{equation}\label{eq:21}
  w_{i, x} = X_i,\quad w_{i, y} = Y_i,\qquad i = 1,\dots,4,
\end{equation}
with the nonlocal variables~$w_i, w_{i,t}, w_{i,tt},\dots$ and their Whitney
product $\tau\colon W\to \mathcal{TE}$.

\section{Shadows and lifts}
\label{sec:shadows-lifts}

We now lift the linearization operator~$\ell_{\mathcal{E}}$ to the
covering~\eqref{eq:21} and solve the equation
$\tilde{\ell}_{\mathcal{E}}(\Phi) = 0$, where $\Phi = (\Phi^u, \Phi^v)$ is of
second jet order and linear in all the odd variables ($p_\sigma$, $q_\sigma$,
and~$w_{i, \sigma}$). Under these assumptions, the equation has three
solutions:
\begin{align}
  \label{eq:22}
  &\begin{array}{rcl}
    \Phi_0^u &=& p,\\
    \Phi_0^v &=& q;
  \end{array}\\
  &\begin{array}{rcl}\label{eq:23}
    \Phi_1^u &=& u_xp - w_1,\\
    \Phi_1^v &=& v_xp - 2u_xq - w_2;
  \end{array}\\\label{eq:24}
  &\begin{array}{rcl}
     \Phi_2^u &=& u_yp - 2yu_xp_t - u_xw_1 + yw_{1,t} + 2w_3,\\
     \Phi_2^v &=& -2v_y p + yv_xp_t + (3u_x^2 + u_y)q + yu_xq_t\\
     &&- v_xw_1 + 2u_xw_2 + yw_{2,t} - w_4. 
  \end{array}
\end{align}
The solution~$\Phi_0$ corresponds to the identical recursion operators and
thus is of no interest, while~$\Phi_1$ and~$\Phi_2$ will be studied in more
detail.

\begin{proposition}\label{prop:shadows-lifts-1}
  The shadows~$\Phi_1$ and~$\Phi_2$ can be lifted both to the tangent
  covering $\mathbf{t}\colon \mathcal{TE}\to \mathcal{E}$\textup{,} and to
  $\tau\colon W \to \mathcal{TE}$.
\end{proposition}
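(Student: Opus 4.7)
The plan is to construct the lifts explicitly by extending each shadow $\Phi_i$ to an evolutionary derivation on $\mathcal{TE}$ (resp.\ $W$) and checking that the extension satisfies the relevant linearization conditions. A lift of $\Phi_i$ to $\mathbf{t}\colon\mathcal{TE}\to\mathcal{E}$ amounts to specifying evolutionary components $\Phi_i^p$, $\Phi_i^q$ on the odd coordinates of $\mathcal{TE}$, so that the combined section $(\Phi_i^u,\Phi_i^v,\Phi_i^p,\Phi_i^q)$ lies in $\ker\ell_{\mathcal{TE}}$. Since $\mathcal{TE}$ is cut out by $\mathcal{E}$ together with the system~\eqref{eq:28}, and the pair $(\Phi_i^u,\Phi_i^v)$ already solves $\tilde\ell_{\mathcal{E}}=0$ by~\eqref{eq:23}--\eqref{eq:24}, the only new requirement is that $(\Phi_i^p,\Phi_i^q)$ satisfies an inhomogeneous version of~\eqref{eq:28}, with right-hand side produced by the action of $\Ev_{\Phi_i}$ on the coefficients $u_x$, $u_y$, $v_x$, $v_y$, $u_{xx}$, etc.

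First I would set up an ansatz for $\Phi_i^p$, $\Phi_i^q$ of the same differential order and parity class as the shadows themselves: even (i.e.\ quadratic in the odd generators $p_\sigma$, $q_\sigma$, $w_{j,\sigma}$, together with a purely local piece) with coefficients polynomial in the local jets of $u$, $v$ and, for $\Phi_2$, with explicit $y$, $t$ dependence mimicking that already present in~\eqref{eq:24}. Substituting into the inhomogeneous linearization and collecting monomials in the odd variables yields a linear system for the coefficients, which I expect to close at second jet order and be solvable by straightforward elimination (computer algebra). Producing explicit $\Phi_i^p$, $\Phi_i^q$ establishes the first half of the proposition.

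For the second half, I must further prescribe evolutionary components $\Phi_i^{w_j}$, $j=1,\dots,4$, whose compatibility conditions read
\begin{equation*}
  D_x\bigl(\Phi_i^{w_j}\bigr)=\Ev_{\Phi_i}(X_j),\qquad
  D_y\bigl(\Phi_i^{w_j}\bigr)=\Ev_{\Phi_i}(Y_j),
\end{equation*}
where $\Ev_{\Phi_i}$ is the previously constructed extended derivation (including its action on $p$, $q$). Local solvability is equivalent to $d_h$-exactness of $\Ev_{\Phi_i}(\omega_j)$, i.e.\ to the Lie derivative of each conservation law $\omega_j$ along the shadow being trivial in $H^{n-1}_h(W)$. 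I would verify this obstruction case by case for the eight pairs $(i,j)$ with $i=1,2$, using the explicit form of $X_j$, $Y_j$ from Section~\ref{sec:cons-laws-cover}; then the potentials $\Phi_i^{w_j}$ are obtained by integration and sit inside $W$ (possibly after enlarging by $t$-derivatives $w_{k,t}$, $w_{k,tt}$, which are already present).

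The main obstacle is the second part: the ansatz for $\Phi_i^{w_j}$ is constrained not only by the two derivative conditions above but also by the need that the result be expressible within the chosen Whitney product $W$ and not require genuinely new nonlocal variables. Concretely, the critical step is checking that $\Ev_{\Phi_i}(\omega_j)$ is $d_h$-cohomologous to zero on $W$ for each $j$; if some class fails to vanish, one would have to enlarge the covering, so the essence of the proof is exactly this cohomological computation. Once it is settled affirmatively, assembling the lifted evolutionary fields is mechanical.
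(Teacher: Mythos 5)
Your proposal is correct, and its second half is exactly the paper's Step~2: the paper lifts $\Phi_1$, $\Phi_2$ to $\tau$ by solving $D_x(\Ev_{\Phi_j}(w_i))=\Ev_{\Phi_j}(X_i)$, $D_y(\Ev_{\Phi_j}(w_i))=\Ev_{\Phi_j}(Y_i)$, $i=1,\dots,4$, and you correctly identify the real content there, namely that the derived two-component conservation law must be trivial \emph{within}~$W$ (in the $\mathcal{Z}$-sense of Section~\ref{sec:discussion}, which is the precise form of your ``$d_h$-cohomologous to zero'' condition); the paper settles this by exhibiting the potentials explicitly in Appendix~\ref{sec:expl-form-lifts}. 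Where you genuinely diverge is Step~1. You propose an undetermined-coefficients ansatz for $(\Phi_i^p,\Phi_i^q)$, quadratic in the odd generators, substituted into the inhomogeneous linearization of~\eqref{eq:28} and solved by elimination. The paper instead exploits the canonical odd nilpotent field~$\dc$ from Statement~(\ref{item:4-1}) of Theorem~\ref{thm:tangent-covering-1}: since the lifted field must anticommute with~$\dc$, the $p$- and $q$-components are forced by the closed formula $\Ev_\Phi(p_\sigma)=-\dc(D_\sigma(\Phi^u))$ (similarly for $q_\sigma$), so the only computation left is extending~$\dc$ to the nonlocal variables, $\dc(w_1)=\dc(w_2)=0$, $\dc(w_3)=yp_xp_t+pp_y$, $\dc(w_4)=y(p_xq_t-p_tq_x)+pq_y-p_yq$, which is a small integration problem of the same type as Step~2. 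The canonical route buys uniqueness and a formula with no free coefficients, automatically producing the $w$-dependent terms (e.g., $p_xw_1$ in $\Phi_2^p$); your route is more general in that it does not rely on the $\dc$-structure, but it is computationally heavier and works only because your ansatz sensibly includes the $w_{j,\sigma}$ among the odd generators --- had you restricted to $p_\sigma$, $q_\sigma$ alone, the system for $\Phi_2$ would have no solution.
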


\begin{proof}
  The proof is accomplished in two technically different steps.

  \textbf{Step~1} consists in lifting to $\mathcal{TE}$ and is based on
  Statements~\ref{thm:4-item:2} and~\ref{item:4-1} of
  Theorem~\ref{thm:tangent-covering-1}. Namely, the equalities
  \begin{equation*}
    \Ev_\Phi(p_\sigma) = L_{\Ev_\Phi}(\dc(u_\sigma)) =
    -\dc(L_{\Ev_\Phi}(u_\sigma)) =  -\dc(D_\sigma(\Phi^u))
  \end{equation*}
  and similar for~$\Ev_\Phi(q_\sigma)$ are to be satisfied (since
  both~$\Ev_\Phi$ and~$\dc$ are odd fields, they anticommute). To compute the
  last term, one needs to find the action of~$\dc$ on~$w_i$. To this end, we
  solve the equations
  \begin{equation*}
    D_x(\dc(w_i)) = \dc(X_i),\quad  D_y(\dc(w_i)) = \dc(Y_i),\qquad i =
    1,\dots, 4,
  \end{equation*}
  and obtain $ \dc(w_1) = \dc(w_2) = 0$ and
  \begin{equation*}
   \dc(w_3) = yp_xp_t  + pp_y,\qquad \dc(w_4) = y(p_xq_t - p_tq_x) + pq_y -
   p_yq. 
 \end{equation*}
 Consequently,
 \begin{align*}
   &\begin{array}{rcl}
      \Phi_1^p &=& pp_x,\\
      \Phi_1^q &=& 2p_xq + pq_x;
    \end{array}\\
   &\begin{array}{rcl}     
      \Phi_2^p &=& p_x w_1 - p p_y,\\
      \Phi_2^q &=& -6u_xp_xq - 2p_xw_2 - 2p_yq + q_xw_1 - pq_y,
   \end{array}
 \end{align*}
 and this finishes the first step.

 \textbf{Step~2.} To accomplish the second step, we solve the equations
 \begin{equation*}
   D_x(\Ev_{\Phi_j}(w_i)) = \Ev_{\Phi_j}(X_i),\quad  D_y(\Ev_{\Phi_j}(w_i)) =
   \Ev_{\Phi_j}(Y_i),\qquad i = 
    1,\dots, 4,
  \end{equation*}
  for $j = 1$, $2$. The results are presented in
  Appendix~\ref{sec:expl-form-lifts}.
\end{proof}

\section{Recursion operators and their action}
\label{sec:recurs-oper-their}

Thus, we have two recursion operators $\mathcal{R}_1$ and~$\mathcal{R}_2$ that
correspond to the found shadows: the first is described by
Equations~\eqref{eq:23} combined with the defining equations of the coverings;
similarly, the second one is obtained from Equations~\eqref{eq:24}. But such a
presentation, as it was indicated above, does not comply with the existing
tradition. To obtain the conventional form, we get rid of nonlocal variables
and for the first operator obtain the system
\begin{align*}
  D_x(\tilde{\phi}^u) &= u_{xx}\phi^u - u_xD_x(\phi^u) - D_y(\phi^u),\\
  D_y(\tilde{\phi}^u) &= u_{xy}\phi^u - u_yD_x(\phi^u) - D_t(\phi^u),\\
  D_x(\tilde{\phi}^v) &= v_{xx}\phi^u + 2v_xD_x(\phi^u) - 2u_{xx}\phi^v -
                        u_xD_x(\phi^v) - D_y(\phi^v),\\ 
  D_y(\tilde{\phi}^v) &= v_{xy}\phi^u + 2v_yD_x(\phi^u) - 2u_{xy}\phi^v -
                        u_yD_x(\phi^v) - D_t(\phi^v). 
\end{align*}
Note that the first two equations provide the known recursion operator of the
one-component Pavlov-Mikhalev equation, see~\cite{Baran-etal-compar-stud}. A
similar form for the second operator is quite complicated and we present it in
Appendix~\ref{sec:conv-pres}.

\begin{remark}
  Oleg Morozov found a simpler presentation of this operator.
\end{remark}

\begin{proposition}
  The operators~$\mathcal{R}_1$ and~$\mathcal{R}_2$ are hereditary and
  compatible\textup{,} i.e.\textup{,}
  \begin{equation*}
    \ldb\Phi_1,\Phi_1\rdb = \ldb\Phi_2,\Phi_2\rdb = \ldb\Phi_1,\Phi_2\rdb = 0.
  \end{equation*}
\end{proposition}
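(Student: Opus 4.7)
The plan is to reduce the statement to a direct (computer-algebra assisted) verification using the explicit lifts $\bar{\Phi}_1$, $\bar{\Phi}_2$ constructed in Proposition~\ref{prop:shadows-lifts-1} and tabulated in Appendix~B. Recall from Section~\ref{sec:tangent-covering} that the odd parities of $\Ev_{\bar{\Phi}_i}$ give
\begin{equation*}
  [\Ev_{\bar{\Phi}_i}, \Ev_{\bar{\Phi}_j}]
  = \Ev_{\bar{\Phi}_i}\circ\Ev_{\bar{\Phi}_j} +
    \Ev_{\bar{\Phi}_j}\circ\Ev_{\bar{\Phi}_i}
  = \Ev_{\ldb\bar{\Phi}_i,\bar{\Phi}_j\rdb},
\end{equation*}
so that the Fr\"olicher--Nijenhuis generating section is computed component-wise by
\begin{equation*}
  \ldb\bar{\Phi}_i,\bar{\Phi}_j\rdb^a
  = \Ev_{\bar{\Phi}_i}(\bar{\Phi}_j^a)
  + \Ev_{\bar{\Phi}_j}(\bar{\Phi}_i^a),
\end{equation*}
where $a$ ranges over the dependent variables of $W$, namely $u,v,p,q,w_1,w_2,w_3,w_4$.

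The first step is to write out these four components (for each of the three pairs $(1,1)$, $(1,2)$, $(2,2)$) for $a\in\{u,v\}$ using formulas~\eqref{eq:23}--\eqref{eq:24} and the formulas for $\Ev_{\bar{\Phi}_j}(w_i)$ given in Appendix~B. Because $\Ev_{\bar{\Phi}_i}$ commutes with the extended total derivatives $\tilde D_{x^l}$ (as $\bar{\Phi}_i$ is a genuine symmetry on $W$), the action on a derivative $u_\sigma$ is given by $D_\sigma(\Phi_i^u)$, which keeps all intermediate expressions polynomial of first order in the odd variables $p_\sigma,q_\sigma,w_{j,\sigma}$. Each bracket component is then a polynomial in these odd variables whose coefficients are functions on $\mathcal{E}$; checking that this polynomial vanishes on $W$ amounts to reducing modulo the equations of $W$ (i.e. substituting $w_{j,x},w_{j,y}$ via~\eqref{eq:21} and using the prolongations~\eqref{eq:28} of the linearization).

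The second step exploits the fact that a shadow on $W$ is uniquely determined by its local $(u,v)$-components on $\mathcal{E}$. Hence once the $u$- and $v$-components of $\ldb\bar{\Phi}_i,\bar{\Phi}_j\rdb$ are shown to vanish, the remaining $(p,q,w_k)$-components vanish automatically; nevertheless, evaluating the $(p,q)$-components provides an independent consistency check that is cheap to carry out since $\Phi_j^p,\Phi_j^q$ were computed in Step~1 of the proof of Proposition~\ref{prop:shadows-lifts-1}. After cancellation the three generating sections come out identically zero, which is exactly the three required identities.

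The main obstacle is purely computational: the lifts $\Ev_{\bar{\Phi}_2}(w_3)$ and $\Ev_{\bar{\Phi}_2}(w_4)$ involve $y$-weighted combinations of $t$-derivatives of the $w_i$, so the intermediate expressions for $\ldb\bar{\Phi}_2,\bar{\Phi}_2\rdb$ and $\ldb\bar{\Phi}_1,\bar{\Phi}_2\rdb$ expand considerably before one can reduce them on $W$. A systematic symbolic computation (Jets or an equivalent CAS), performed in the internal coordinates fixed in Section~\ref{sec:equat-its-symm} and with careful bookkeeping of the Grassmann signs coming from the odd variables, makes the cancellations explicit and establishes the three equalities simultaneously.
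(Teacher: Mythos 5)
Your overall strategy is the same as the paper's: the proposition is proved there by exactly the kind of direct symbolic verification you describe (the paper's proof consists of the statement that the result follows by ``tiresome, but quite straightforward computations''), using the lifts from Proposition~\ref{prop:shadows-lifts-1} and Appendix~\ref{sec:expl-form-lifts}, and your formula $\ldb\bar{\Phi}_i,\bar{\Phi}_j\rdb^a = \Ev_{\bar{\Phi}_i}(\bar{\Phi}_j^a) + \Ev_{\bar{\Phi}_j}(\bar{\Phi}_i^a)$ is the correct odd--odd analogue of~\eqref{eq:5}. There is, however, a genuine flaw in your Step~2. From the vanishing of the $u$- and $v$-components you infer that the $(p,q,w_k)$-components vanish ``automatically'', on the grounds that a shadow is determined by its $(u,v)$-components. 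That is true of shadows but irrelevant here: the bracket $\ldb\bar{\Phi}_i,\bar{\Phi}_j\rdb$ is by definition a full (parity-$2$) nonlocal symmetry of the covering, and hereditarity and compatibility mean the vanishing of \emph{all} of its components. A lift of the zero shadow need not be zero: in coverings there generally exist nonlocal symmetries with vanishing shadow (compare $\phi_7,\dots,\phi_{18}$ in Appendix~\ref{sec:expl-form-symm}, which have zero shadows with respect to the projection $(u,v)\mapsto u$), so the uniqueness of lifts that your inference silently uses simply fails.

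What can be salvaged is the following. For the $p,q$-components your conclusion does hold, but for a different reason: the lifts to $\mathcal{TE}$ were constructed in Step~1 of the proof of Proposition~\ref{prop:shadows-lifts-1} precisely so that $L_{\Ev_{\bar{\Phi}_i}}$ anticommutes with the nilpotent field~$\dc$; hence the even bracket $Z$ commutes with~$\dc$, and $Z(p_\sigma)=\dc(Z(u_\sigma))$, $Z(q_\sigma)=\dc(Z(v_\sigma))$ vanish once the $(u,v)$-components do. For the $w_k$-components no such shortcut exists: from $Z(X_k)=Z(Y_k)=0$ one only obtains $\tilde{D}_x(Z(w_k))=\tilde{D}_y(Z(w_k))=0$, and a nonzero odd function annihilated by $\tilde{D}_x$ and~$\tilde{D}_y$ is not excluded a priori --- such quantities are exactly what the two-dimensional reformulation in Section~\ref{sec:discussion} is designed to capture. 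So the $w_k$-components must be computed and reduced on~$W$ like the local ones, which is in effect what the paper's computation does. A minor inaccuracy as well: the intermediate expressions are not of first order in the odd variables; the bracket components are quadratic in $p_\sigma$, $q_\sigma$, $w_{j,\sigma}$ (already $\Phi_1^{w_1}=2u_xpp_x+pp_y$ is quadratic), consistent with the bracket having parity~$2$.
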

 
\begin{proof}
  The result is proved by tiresome, but quite straightforward computations.
\end{proof}

Let us now describe the action of the constructed operators on symmetries
of~$\mathcal{E}$. Note first that the operator~$\mathcal{R}_1$ relates the
trivial symmetry~$\phi = 0$ with symmetries of the form
\begin{equation*}
  \phi_3[\vartheta_1] + \phi_{15}[\vartheta_2].
\end{equation*}
In a similar way, $\mathcal{R}_2$ transforms zero to
\begin{equation*}
  \phi_3[\vartheta_1] + \phi_{15}[\vartheta_2] + \phi_4[\vartheta_3] +
  \phi_{16}[\vartheta_4]. 
\end{equation*}
All the actions below will be presented modulo these sets of symmetries. Note
also that in a number of cases the result action is a nonlocal shadow; the
latter will be denoted by~$\nu_i$ (their exact form is presented in
Appendix~\ref{sec:high-order-cons}).

We have
\begin{align*}
  \mathcal{R}_1\colon
  &\phi_1\mapsto-\phi_2,\ \phi_2\mapsto \nu_1,\\
  &\phi_3[\vartheta]\mapsto \phi_4[\vartheta],\ \phi_4[\vartheta]\mapsto
    -\phi_5[\vartheta],\ \phi_5[\vartheta]\mapsto -\phi_6[\vartheta],\
    \phi_6[\vartheta] \mapsto \nu_2[\vartheta],\\
  &\phi_7\mapsto \nu_3,\\
  &\phi_8\mapsto -\phi_9,\ \phi_9\mapsto-\phi_{10},\dots,\phi_{13}\mapsto
    -\phi_{14},\ \phi_{14}\mapsto \nu_4,\\
  &\phi_{15}[\vartheta]\mapsto -2\phi_{16}[\vartheta],\
    \phi_{16}[\vartheta]\mapsto-\phi_{17}[\vartheta],\
    \phi_{17}[\vartheta]\mapsto 
    -\phi_{18}[\vartheta],\ \phi_{18}[\vartheta]\mapsto \nu_5[\vartheta],
    \intertext{and}
    \mathcal{R}_2\colon&\phi_1\mapsto -\nu_1,\ \phi_2\mapsto \nu_6,\\
  &\phi_3[\vartheta]\mapsto-\phi_5[\vartheta],\ \phi_4[\vartheta]\mapsto
    \phi_6[\vartheta],\ \phi_5[\vartheta]\mapsto
    -\nu_2[\vartheta],\ \phi_6[\theta]\mapsto 
    \nu_7[\vartheta],\\
  &\phi_7\mapsto \nu_8,\\
  &\phi_8\mapsto \phi_{10},\dots,\phi_{12}\mapsto \phi_{14},\ \phi_{13} \mapsto
    -\nu_4,\ \phi_{14}\mapsto \nu_9,\\
  &\phi_{15}[\vartheta]\mapsto 2\phi_{17}[\vartheta],\
    \phi_{16}[\vartheta]\mapsto \phi_{18}[\vartheta],\ \phi_{17}[\vartheta]
    \mapsto 
    -\nu_5[\vartheta],\ \phi_{18}\mapsto \nu_{10}[\vartheta].
\end{align*}

\begin{remark}
  We see that the ation of $\mathcal{R}_2$ is equivalent to that
  of~$\mathcal{R}_1\circ\mathcal{R}_1$.
\end{remark}

But this is not the end of a story.

\section{Discussion}
\label{sec:disc-queer-oper}

There exist yet another two nontrivial coverings over~$\mathcal{TE}$ linear
in~$p_\sigma$ and~$q_\sigma$. They are associated with the nonlocal variables
\begin{align*}
  &\begin{array}{rcl}
     w_{5,x} &=& 2u_{xx}p_t - 2u_{xt}p_x - p_{yt},\\
     w_{5,y} &=& 2u_{xy}p_t - u_{yt}p_x - u_yp_{xt} - u_{xt}p_y + u_xp_{yt} -
                 p_{tt} 
   \end{array}
              \intertext{and}
  &\begin{array}{rcl}
     w_{6,x} &=& v_{xx}p_t - v_{xt}p_x + u_{xx}q_t - u_{xt}q_x + q_{yt},\\
     w_{6,y} &=& v_{xy}p_t - 2v_{yt}p_x - 2v_yp_{xt} + v_{xt}p_y + 2v_xp_{yt}
   \\
     &&+ u_{xy}q_t + u_{yt}q_x + u_yq_{xt} - 2u_{xt}q_y - u_xq_{yt} + q_{tt}.
   \end{array}
\end{align*}
The canonical nilpotent field lifts to these coverings by the formulas
\begin{equation}\label{eq:25}
  \dc(w_5) = 2p_xp_t, \qquad \dc(w_6) = p_xq_t - p_tq_x.
\end{equation}
The equation $\tilde{\ell}_{\mathcal{E}}(\Phi) = 0$, where~$\Phi$ may depend
on all the nonlocal variables~$w_1,\dots,w_6$ delivers twelve additional
solutions~$\Phi_3,\dots,\Phi_{14}$ with the components
\begin{align*}
  \Phi_3^u
  &=f (v_{xt} p_t + v_xp_{tt} + u_{xt}q_t + u_xq_{tt} + w_{2,tt} - w_{6,t}),\\
  \Phi_3^v
  &=0,\\[3pt]
  \Phi_4^u
  &=f\left(u_{xt}p_t + u_xp_{tt} - \frac{1}{2}w_{1,tt} -
    \frac{1}{2}w_{5,t}\right),\\ 
  \Phi_4^v
  &=0,\\[3pt]
  \Phi_5^u
  &=-f(yw_{2,tt} + yv_xp_{tt} + yu_xq_{tt} + yu_{xt}q_t + yv_{xt}p_t -
    u_x^2q_t \\
  &- u_x (v_xp_t + w_{2,t} - w_6) - y w_{6,t})
    - y\dot{f}(u_xq_t + v_x p_t + w_{2,t} - w_6),\\
  \Phi_5^v
  &=fv_x(v_x p_t + u_x q_t + w_{2,t} - w_6),\\[3pt]
  \Phi_6^u
  &=f\left(yu_xp_{tt} + yu_{xt}p_t - u_x^2p_t
    + \frac{1}{2}u_x(w_{1,t} + w_5) -
    \frac{1}{2}yw_{5,t}-\frac{1}{2}yw_{1,tt}\right) \\ 
  &+ y\dot{f}\left(u_xp_t - \frac{1}{2}(w_{1,t} + w_5)\right),\\
  \Phi_6^v
  &=\frac{1}{2}f\big(y w_{2,tt} + yv_xp_{tt} + yu_xq_{tt} + y u_xt q_t + y
    v_xt p_t + 2u_x^2 q_t \\
  &- 2u_x(w_6 - w_{2,t}) + v_x(w_{1,t} + w_5) - yw_{6,t}\big) \\
  &+ \frac{1}{2}y\dot{f} (u_xq_t + v_xp_t - w_6 + w_{2,t}),\\[3pt]
  \Phi_7^u
  &=f(v_x p_t + u_x q_t + w_{2,t} - w_6),\\
  \Phi_7^v
  &=0,\\[3pt]
  \Phi_8^u
  &=0,\\
  \Phi_8^v
  &=f(v_{xt}p_t + v_xp_{tt} + u_{xt}q_t + u_xq_{tt} + w_{2,tt} - w_{6,t}),\\[3pt]
  \Phi_9^u
  &=0,\\
  \Phi_9^v
  &=f\left(u_{xt} p_t + u_xp_{tt} - \frac{1}{2}(w_{1,tt} + w_{5,t})\right),\\[3pt]
  \Phi_{10}^u
  &=0,\\
  \Phi_{10}^v
  &=f\left(u_xp_t - \frac{1}{2}(w_{1,t} + w_5)\right),\\[3pt]
  \Phi_{11}^u
  &=f\left(u_xp_t - \frac{1}{2}(w_{1,t} + w_5)\right),\\
  \Phi_{11}^v
  &=0,\\[3pt]
  \Phi_{12}^u
  &=0,\\
  \Phi_{12}^v
  &=\frac{1}{2}f\left(yu_xp_{tt} + yu_{xt}p_t + 2u_x^2p_t - u_x(w_{1,t} + w_5)
    -\frac{1}{2}y (w_{1,tt} + w_{5,t})\right),\\[3pt]
  \Phi_{13}^u
  &=0,\\
  \Phi_{13}^v
  &=f(v_x p_t + u_x q_t + w_{2,t} - w_6),\\[3pt]
  \Phi_{14}^u
  &=0,\\
  \Phi_{14}^v
  &=\frac{1}{2}f\big(yw_{2,tt} + yv_xp_{tt} + yu_xq_{tt} + yu_{xt}q_t +
    yv_{xt}p_t + u_x^2q_t \\
  &+ 2u_x (v_x p_t + w_{2,t} - w_6) - yw_{6,t}\big) 
    + \frac{1}{2}y\dot{f}(u_xq_t + v_x p_t + w_{2,t} - w_6),
\end{align*}
where~$f = f(t)$ and~$\dot{f}$ denotes the $t$-derivative..

Due to Equation~\eqref{eq:25}, these shadows are lifted to the tangent
covering $\mathrm{t}\colon \mathcal{TE} \to\mathcal{E}$. The result is
presented in Appendix~\ref{sec:lifts-phi_3-}. Moreover, the following result
is valid:
\begin{proposition}
  All the shadows $\Phi_3,\dots,\Phi_{11},\Phi_{13},\Phi_{14}$ are lifted to
  the covering over~$\tilde{\tau}\colon \tilde{W}\to\mathcal{TE}$ with the
  nonlocal variables $w_1,\dots,w_6$. The shadow~$\Phi_{12}$ can be lifted if
  and only if $f = \const$. The nonlocal symmetries $\Phi_1$ and $\Phi_2$ are
  also lifted to~$\tilde{\tau}$. One
  has
  \begin{equation}\label{eq:26}
    \ldb \Phi_i,\Phi_j\rdb = 0, \qquad i,j=1,\dots,14,
  \end{equation}
  for these lifts.
\end{proposition}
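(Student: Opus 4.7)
The strategy mirrors the two-step scheme used in the proof of Proposition~\ref{prop:shadows-lifts-1}, but now applied to the enlarged covering $\tilde{\tau}\colon \tilde{W}\to\mathcal{TE}$, whose fibre coordinates are $w_1,\dots,w_4$ (coming from the conservation laws of Section~\ref{sec:cons-laws-cover}) and the two new variables $w_5,w_6$ defined in Section~\ref{sec:disc-queer-oper}. For each of the fourteen shadows $\Phi_j$, the plan is to construct an extension $\bar{\Phi}_j$ of $\Ev_{\Phi_j}$ to $\tilde{W}$ by specifying the values $\bar{\Phi}_j(w_i)$ for $i=1,\dots,6$ and then to verify the Fr\"{o}licher--Nijenhuis identities~\eqref{eq:26} by a direct super-commutator computation on the local generators $u$, $v$.

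First I would carry out the lifts. Given $\Phi_j$, I seek functions $W_{ji}\in\mathcal{F}(\tilde{W})$ such that
\begin{equation*}
  D_x(W_{ji}) = \Ev_{\Phi_j}(X_i),\qquad D_y(W_{ji}) = \Ev_{\Phi_j}(Y_i),\qquad i=1,\dots,6,
\end{equation*}
where the right-hand sides are computed in the tangent covering extended by $w_1,\dots,w_6$; for $i=5,6$ one must use the conservation-law components displayed at the start of Section~\ref{sec:disc-queer-oper}. Integrability is automatic provided $D_x\Ev_{\Phi_j}(Y_i)-D_y\Ev_{\Phi_j}(X_i)=0$ on $\tilde{W}$, which in turn follows from the closedness of $\omega_i$ together with the fact that $\Phi_j$ is a shadow (so $\Ev_{\Phi_j}$ commutes with $D_x$, $D_y$ modulo the already established lifts); when true this determines $W_{ji}$ up to an element of $\ker(D_x,D_y)$, i.e.\ a constant, which is irrelevant. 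For $\Phi_1$ and $\Phi_2$ the values $W_{ji}$ for $i=1,\dots,4$ have already been supplied in Appendix~\ref{sec:expl-form-lifts}, so only the additional pair $W_{j5}$, $W_{j6}$ remains to be produced.

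The delicate point is the failure of integrability for $\Phi_{12}$. After substituting $\Phi_{12}$ into the right-hand sides above, I expect the cross-derivative obstruction $D_x\Ev_{\Phi_{12}}(Y_i)-D_y\Ev_{\Phi_{12}}(X_i)$ to reduce, after using the defining relations for $w_5,w_6$ and the equations~\eqref{eq:19} and~\eqref{eq:28}, to an expression proportional to $\dot{f}$ multiplied by a non-trivial polynomial in the jet variables. That polynomial will be non-zero on $\tilde{W}$, so the only way to kill the obstruction is $\dot{f}=0$, i.e.\ $f=\const$. Pinpointing this obstruction, and verifying that for all other $\Phi_j$ the analogous expression vanishes identically, is the main computational hurdle, since the formulas for $\Phi_3,\dots,\Phi_{14}$ already involve second derivatives of $p$ and $q$, nonlocal variables, and the parameter $f(t)$.

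Once all $\bar{\Phi}_j$ are in hand, the brackets~\eqref{eq:26} are computed from the definition given at the end of Section~\ref{sec:tangent-covering}. Since $\Ev_{\bar{\Phi}_i}$ and $\Ev_{\bar{\Phi}_j}$ are odd evolutionary fields, their super-commutator is the even field $\Ev_{\bar{\Phi}_i}\Ev_{\bar{\Phi}_j}+\Ev_{\bar{\Phi}_j}\Ev_{\bar{\Phi}_i}=\Ev_{\ldb\bar{\Phi}_i,\bar{\Phi}_j\rdb}$, and it suffices to check that its action on the local generators $u$, $v$ is zero, i.e.
\begin{equation*}
  \Ev_{\bar{\Phi}_i}(\bar{\Phi}_j^u)+\Ev_{\bar{\Phi}_j}(\bar{\Phi}_i^u)=0,\qquad
  \Ev_{\bar{\Phi}_i}(\bar{\Phi}_j^v)+\Ev_{\bar{\Phi}_j}(\bar{\Phi}_i^v)=0,
\end{equation*}
for all $1\leq i\leq j\leq 14$. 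This is an exercise in symbolic computation on $\tilde{W}$, routine once the lifts are explicit; I would organize it by exploiting the block structure in which many $\Phi_j^u$ or $\Phi_j^v$ vanish, leaving only a handful of non-trivial pairs to test. The anticipated obstacle is purely computational volume, not a conceptual one: ninety-one pairs of brackets, each evaluated after substitution of the lifted derivations into expressions already containing $w_{2,tt}$, $w_{6,t}$, and the like.
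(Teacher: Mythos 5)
Your overall scheme coincides with the paper's: the paper's proof likewise consists of producing the explicit lifts (the $p,q$-components via the $\dc$-anticommutation trick of Step~1 of Proposition~\ref{prop:shadows-lifts-1}, the $w_i$-components by solving $D_x(\Ev_{\Phi_j}(w_i))=\Ev_{\Phi_j}(X_i)$, $D_y(\Ev_{\Phi_j}(w_i))=\Ev_{\Phi_j}(Y_i)$, recorded in Appendix~\ref{sec:lifts-phi_3-}) and then verifying~\eqref{eq:26} by direct computation; your identification of the $\Phi_{12}$ obstruction as a cross-derivative residue proportional to $\dot f$ is also the right mechanism. However, there is one genuine gap in your bracket verification: it does \emph{not} suffice to check that the super-commutator annihilates the local generators $u$, $v$. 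That only shows the bracket has zero \emph{shadow}, whereas~\eqref{eq:26} asserts vanishing of the full nonlocal symmetry on $\tilde{W}$. The covering admits nontrivial symmetries with identically zero $u$-, $v$-, $p$-, $q$-components — e.g.\ the shifts $w_i\mapsto w_i+g(t)$, which preserve the defining relations since their right-hand sides are free of the $w$'s and since $\ker(D_x,D_y)$ on $\tilde{W}$ contains arbitrary functions of $t$, not just constants (your ``up to a constant'' is also off for the same reason, and this matters here because all the shadows carry the parameter $f(t)$). So a bracket vanishing on $u,v$ could still act nontrivially on the nonlocal variables. The $p,q$-components do come for free: if both lifts anticommute with the lifted $\dc$ (which requires the values $\dc(w_1)=\dc(w_2)=0$, $\dc(w_3)$, $\dc(w_4)$ from the proof of Proposition~\ref{prop:shadows-lifts-1} and~\eqref{eq:25} for $w_5,w_6$), then the even field $K=\Ev_{\bar\Phi_i}\Ev_{\bar\Phi_j}+\Ev_{\bar\Phi_j}\Ev_{\bar\Phi_i}$ commutes with $\dc$, whence $K(p_\sigma)=\dc(D_\sigma K(u))=0$ once $K(u)=K(v)=0$ — but this argument must be made explicitly, and the $w_i$-components $K(w_1),\dots,K(w_6)$ must still be checked separately (one only gets $K(w_i)\in\ker(D_x,D_y)$ for free, which leaves exactly the dangerous functions of $t$). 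The paper's ``straightforward computation'' is a computation of the full commutator on all generators, and your proof should be amended accordingly; the rest, including the treatment of $\Phi_1$, $\Phi_2$ via the already-known $w_1,\dots,w_4$-components of Appendix~\ref{sec:expl-form-lifts}, is sound.
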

\begin{proof}
  The explicit expressions for the lifts are given in
  Appendix~\ref{sec:lifts-phi_3-}. The proof of~\eqref{eq:26} is a
  straightforward computation.
\end{proof}

\subsection{``Queer'' operators}
\label{sec:queer-operators}

The recursion operators associated with the shadows~$\Phi_3,\dots,\Phi_{14}$
are extremely degenerate. Namely, their action is as follows: the
operators~$\mathcal{R}_3$, $\mathcal{R}_4$, $\mathcal{R}_7$,
and~$\mathcal{R}_{10}$ take the entire algebra~$\sym(\mathcal{E})$ to the
symmetry~$\phi_3[\vartheta]$. In a similar way,
\begin{align*}
  \mathcal{R}_5\colon&\sym(\mathcal{E})\to \phi_4[\vartheta],\\
  \mathcal{R}_6\colon&\sym(\mathcal{E})\to \phi_4[\vartheta] +
                       \phi_{16}[\vartheta'],\\
  \mathcal{R}_8,\mathcal{R}_9, \mathcal{R}_{11},
  \mathcal{R}_{13}\colon&\sym(\mathcal{E})\to \phi_{15}[\vartheta],\\
  \mathcal{R}_{12}, \mathcal{R}_{14}\colon&\sym(\mathcal{E})\to
                                            \phi_{16}[\vartheta]. 
\end{align*}
Perhaps, this phenomenon is partially explained by the nature of the nonlocal
variable~$w_5$ and~$w_6$, which we, in particular, discuss below.

\subsection{On the theory of two-component conservation laws}
\label{sec:discussion}

Let us study the coverings~$\tau_5$ and~$\tau_6$ in more detail. First of all,
easy computations show that their covering equations, as well as the one for
the Whitney product~$\tau_5\oplus\tau_6$, are differentially connected, i.e.,
all the three coverings are irreducible.

On the other hand, applying formula~\eqref{eq:27} to the conservation laws
corresponding to the coverings at hand, one sees that
\begin{equation}\label{eq:29}
  d_h(\omega_5) = -D_t(F_p)\,dx\wedge\,dy\wedge\,dt,\quad  d_h(\omega_6) =
  D_t(F_q)\,dx\wedge\,dy\wedge\,dt, \qquad \text{ on } J^\infty(\pi),
\end{equation}
where~$F_p$ and~$F_q$ are the 1st and 2nd equations in~\eqref{eq:28},
respectively, which means that~$\psi_{\omega_5} = \psi_{\omega_6} = 0$, i.e.,
our conservation laws are trivial.

Thus, we see that in the multi-dimensional case relations between
two-component conservation laws and the corresponding coverings are more
complicated, than in the case~$\dim M = 2$ (cf.~\cite{VinKrasTrends}). The
following construction is to explain the difference.

Recall that all the conservation laws above were of the form
\begin{equation}\label{eq:30}
  \omega = (X\,dx + Y\,dy)\wedge\,dt
\end{equation}
and consider the subdistribution~$\mathcal{Z}$ in the Cartan distribution
on~$\mathcal{E}$ spanned by the total derivatives~$D_x$ and~$D_y$. Note
that~$\mathcal{Z}$ is obviously Frobenius integrable. Consequently, one can
literally repeat Vinogradov's construction of the $\mathcal{C}$-spectral
sequence, see~\cite{AMV-SecCal}, using the distribution~$\mathcal{Z}$ instead
of~$\mathcal{C}$. Denote this spectral sequence
by~$\{E_s^{p,q}(\mathcal{Z})\}$. The following statement is actually a
reformulation of Vinogradov's results for the case of~$\mathcal{Z}$:

\begin{theorem}
  Consider the system~$\mathcal{E}$ consisting of Equations~\eqref{eq:19}
  and~\eqref{eq:28}. Then\textup{:}
  \begin{enumerate}
  \item Equivalence classes of coverings associated with the
    form~\eqref{eq:30} are in one-to-one correspondence with elements of the
    group~$E_1^{0,1}(\mathcal{Z})$. 
  \item Define the \emph{generating element}~$\psi_\theta$ of a
    form~$\theta\in \ker d_0\subset E_0^{0,1}(\mathcal{Z})$ as the image of
    its coset under the differential
    $d_1\colon E_1^{0,1}(\mathcal{Z})\to E_1^{1,1}(\mathcal{Z})$. Then two
    forms~$\theta$ and~$\theta'$ define equivalent coverings if and only
    if~$\psi_\theta = \psi_{\theta'}$.
  \end{enumerate}
\end{theorem}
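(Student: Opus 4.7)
The plan is to transplant Vinogradov's proof of the classical conservation-law theorem onto the spectral sequence of the integrable subdistribution $\mathcal{Z} = \langle D_x, D_y\rangle$ rather than of the full Cartan distribution $\mathcal{C}$. Since Vinogradov's construction of $\{E_s^{p,q}(\mathcal{C})\}$ uses only Frobenius integrability, filtering $\Lambda^*(\mathcal{E})$ by $\mathcal{Z}$-vertical degree produces an analogous spectral sequence $\{E_s^{p,q}(\mathcal{Z}), d_s\}$ on the system formed by~\eqref{eq:19} and~\eqref{eq:28}. The first page identifies $E_1^{0,1}(\mathcal{Z})$ with $\mathcal{Z}$-closed $1$-forms $X\,dx + Y\,dy$ modulo $\mathcal{Z}$-exact ones $D_xf\,dx + D_yf\,dy$, while $E_1^{1,1}(\mathcal{Z})$ serves as the natural target of the $d_1$-differential and plays the role of ``$\mathcal{Z}$-cosymmetries.''

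For part~(1), I would argue that a form of type~\eqref{eq:30} gives the covering system $w_x = X$, $w_y = Y$ of Section~\ref{sec:tangent-covering}, which is a genuine covering exactly when $\omega$ is $\mathcal{Z}$-closed (compatibility $D_xY = D_yX$), and the residual freedom $w \mapsto w + f$ corresponds precisely to modifying $X\,dx + Y\,dy$ by $d_\mathcal{Z} f$. Hence the assignment $\omega \mapsto [X\,dx + Y\,dy]$ descends to a well-defined map from equivalence classes of such coverings to $E_1^{0,1}(\mathcal{Z})$. Surjectivity is immediate since every $\mathcal{Z}$-closed $1$-form reconstructs an $\omega$ of the required shape; injectivity follows because a $\mathcal{Z}$-exact representative corresponds to a covering in which $w$ is already a local function on $\mathcal{E}$, hence to a trivial covering.

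For part~(2), the generating element $\psi_\theta := d_1[\theta]$ is defined through the standard spectral-sequence procedure: lift a representative $\bar\theta$ to $J^\infty(\pi)$, apply the de~Rham differential, and use the regularity of $\mathcal{E}$ to write $d\bar\theta \equiv \Delta(F, \ell_F(q))$ modulo the next filtration step for some $\mathcal{C}$-differential operator $\Delta$; then, exactly as in Section~\ref{sec:cons-laws-cosymm}, $\psi_\theta$ is the restriction of $\Delta^*(1)$. The ``only if'' direction is immediate from part~(1), since cohomologous $\theta, \theta'$ have the same image under $d_1$. The ``if'' direction is the content of the key lemma: $d_1$ is injective on $E_1^{0,1}(\mathcal{Z})$. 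This is a cohomological statement to be verified by invoking a ``$2$-line'' property of $\mathcal{E}$ relative to $\mathcal{Z}$, the direct analog of the hypothesis used in Vinogradov's proof. In the special case $\dim M = 2$ one has $\mathcal{Z} = \mathcal{C}$ and the whole argument collapses onto the classical one of~\cite{VinKrasTrends}.

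The main obstacle is verifying the $\mathcal{Z}$-version of the $2$-line property for the augmented system~\eqref{eq:19}--\eqref{eq:28}, i.e., the vanishing of the relevant Koszul-type cohomology. Unlike the $\dim M = 2$ case, here $\mathcal{Z}$ is a strict subdistribution of $\mathcal{C}$, so the extra $dt$ factor and the Cartan $1$-forms populate $E_0^{p,q}(\mathcal{Z})$ in ways that do not occur classically, and one has to track carefully how total derivations act on the mixed generators $p_\sigma, q_\sigma$ entering $\ell_F(q) = 0$. Once this acyclicity is established, the remaining verifications are line-by-line translations of Vinogradov's argument.
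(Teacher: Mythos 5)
Your overall skeleton matches the paper's intent---the theorem is presented there precisely as ``a reformulation of Vinogradov's results for the case of~$\mathcal{Z}$''---but your proposal stalls exactly where an actual argument is required: you leave the injectivity of $d_1$ on $E_1^{0,1}(\mathcal{Z})$ (your ``$\mathcal{Z}$-version of the $2$-line property'') as an unverified ``main obstacle''. The paper never proves a new Koszul-type acyclicity in this mixed setting, because the remark following the theorem supplies the device that makes such a computation unnecessary: demote $t$ to a \emph{dependent} variable subject to the formal relations~\eqref{eq:32}, promote all $t$-derivatives of $u,v,p,q$ to new unknowns, and pass to the genuinely two-dimensional system~\eqref{eq:31}. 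Under this change of viewpoint $dt$ becomes a Cartan form, $\mathcal{Z}$ becomes the \emph{full} Cartan distribution of the auxiliary equation~$\tilde{\mathcal{E}}$, the sequence $\{E_s^{p,q}(\mathcal{Z})\}$ is literally the classical Vinogradov $\mathcal{C}$-spectral sequence of~$\tilde{\mathcal{E}}$, and both parts of the theorem become instances of the known two-dimensional results on one-dimensional (Abelian) coverings and generating sections~\cite{VinKrasTrends, AMV-SecCal}. The complications you fear---``the extra $dt$ factor and the Cartan $1$-forms populating $E_0^{p,q}(\mathcal{Z})$''---are exactly what this reduction eliminates; you even observe that everything collapses to the classical picture when $\dim M = 2$, but miss that the general case \emph{reduces} to that one. (A smaller gap in your part~(1): you only check that the zero class yields the trivial covering; injectivity also needs the converse, that an arbitrary equivalence of coverings forces the classes to coincide, which again is inherited from the classical two-dimensional statement after the reduction.)

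There is also a concrete pitfall in your definition of~$\psi_\theta$. You propose to write $d\bar{\theta} \equiv \Delta(F,\ell_F(q))$ and set $\psi_\theta = \left.\Delta^*(1)\right|_{\mathcal{E}}$ ``exactly as in Section~\ref{sec:cons-laws-cosymm}''. If $\Delta^*$ is the adjoint with respect to \emph{all three} total derivatives, this gives the wrong object and the ``if'' direction of part~(2) becomes false: by~\eqref{eq:29} one has $d_h(\omega_5) = -D_t(F_p)\,dx\wedge dy\wedge dt$, so the three-dimensional adjoint yields $\psi_{\omega_5} = 0$ (and likewise $\psi_{\omega_6} = 0$), yet $\tau_5$ and~$\tau_6$ are nontrivial, irreducible, mutually inequivalent coverings---these are precisely the motivating examples of the section. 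The correct generating element takes the adjoint only with respect to $D_x$ and~$D_y$, recording the powers of~$D_t$ as row indices of the infinite cosymmetry matrix of~$\tilde{\mathcal{E}}$ (equivalently: for~$\mathcal{Z}$ the derivation~$D_t$ is no longer a total derivative and cannot be integrated by parts), which produces the nonzero matrices $\psi_{\omega_5}$, $\psi_{\omega_6}$ displayed in the paper. Your phrase about ``tracking how total derivations act'' gestures at this distinction, but as written your recipe does not make it, and without it the key lemma you postpone would be contradicted by the very examples the theorem is designed to explain.
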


\begin{remark}
  To clarify the structure of the spectral sequence at hand, let us discuss
  the following construction. Consider the infinite set of unknowns~$u_0 = u$,
  $u_k = u_{\underbrace{{}_{t\dots t}}_{k \text{ times}}}$, $k\geq 1$, and similar
  for~$v$, $p$, and~$q$. Consider also the infinite system of equations
  \begin{equation}\label{eq:31}
    \begin{array}{rcl}
      u_{k,yy} &=& D_t^k(u_{1,x} - u_x u_{xy} + u_yu_{xx}),\\[2pt]
      v_{k, yy} &=& D_t^k(v_{1,x} - u_x v_{xy} - 2 v_y u_{xx} + 2 v_x u_{xy} + u_y
                 v_{xx}),\\[2pt]
      p_{k, yy} &=& D_t^k(u_yp_{xx} - u_{xy}p_x - u_xp_{xy} + u_{xx}p_y +
                    p_{1,x}),\\ [2pt]
      q_{k,yy} &=& D_t^k(2v_xp_{xy} + v_{xx}p_y - v_{xy}p_x - 2v_yp_{xx} +
                   2u_{xy}q_x \\ 
             &+&   u_yq_{xx} - u_xq_{xy} - 2u_{xx}q_y + q_{1,x})
    \end{array}
  \end{equation}
  evidently obtained from~\eqref{eq:19} and~\eqref{eq:28} and add to it two
  formal relations
  \begin{equation}\label{eq:32}
    t_x = 0, \qquad t_y = 0.
  \end{equation}
  This is a two-dimensional system; denote its infinite prolongation
  by~$\tilde{\mathcal{E}}$. Then the Vinogradov $\mathcal{C}$-spectral
  sequence of~$\tilde{\mathcal{E}}$ coincides
  with~$\{E_s^{p,q}(\mathcal{Z})\}$. In particular, conservation laws
  of~$\tilde{\mathcal{E}}$ coincide with two-component conservation laws of
  the initial equation~$\mathcal{E}$.

  Let us compute the generating elements of~$\omega_5$ and~$\omega_6$. Note
  first that cosymmetries of~$\tilde{\mathcal{E}}$ (elements
  of~$E_1^{1,1}(\mathcal{Z})$) are infinite-dimensional
  covectors. Nevertheless, it is more convenient to present them as infinite
  matrices
  \begin{equation*}
    \psi =
    \begin{vmatrix}
      \psi_0^u&\psi_0^v&\psi_0^p&\psi_0^q\\
      \psi_1^u&\psi_1^v&\psi_1^p&\psi_1^q\\
      \hdotsfor{4}\\
      \psi_k^u&\psi_k^v&\psi_k^p&\psi_k^q\\
      \hdotsfor{4}
    \end{vmatrix},
  \end{equation*}
  where the lines correspond to equations in~\eqref{eq:31}, while the
  superscripts indicate the ``type'' of equation (obviously, the components
  corresponding to Equations~\eqref{eq:32} always vanish). Then we have
  \begin{equation*}
    \psi_{\omega_5} =
    \begin{vmatrix}
      0&0&0&0\\
      0&0&-1&0\\
      0&0&0&0\\
      \hdotsfor{4}
    \end{vmatrix},\qquad
    \psi_{\omega_6} =
    \begin{vmatrix}
      0&0&0&0\\
      0&0&0&1\\
      0&0&0&0\\
      \hdotsfor{4}
    \end{vmatrix},
  \end{equation*}
  so, the corresponding coverings are nontrivial indeed.
\end{remark}

The above observations lead to the following construction. Consider an
$n$-dimensional equation~$\mathcal{E}$ and a horizontal form~$\rho \in
\Lambda_h^{n-2}(\mathcal{E})$. Assume that
\begin{enumerate}
\item $d_h(\rho) = 0$,
\item $\rank\ker\rho = 2$, where~$\rho$ is understood as a map
  $\mathcal{C}(\mathcal{E}) \to \Lambda_h^{n-3}(\mathcal{E})$,
  $X\mapsto i_X(\rho)$.
\end{enumerate}
Then $\mathcal{Z} = \mathcal{Z}_\rho = \ker\rho$ is a Frobenius integrable
$2$-dimensional sub-distribution in~$\mathcal{C}(\mathcal{E})$. Indeed,
let~$X$, $Y\in \mathcal{Z}$. Then
\begin{multline*}
  i_{[X,Y]}(\rho) = (i_X\circ L_Y - L_Y\circ i_X)(\rho) = (i_X\circ
  L_Y)(\rho)\\
  = i_X\circ(d\circ i_Y + i_Y\circ d)(\rho) = (i_X\circ d)(i_Y(\rho)) +
  (i_X\circ i_Y)d(\rho)) = 0,
\end{multline*}
where $i$ and $L$ denote the inner product and Lie derivative, respectively.

Similar to the example above, consider the spectral
sequence~$\{E_s^{p,q}(\mathcal{Z_\rho})\}$ associated
with~$\mathcal{Z}_\rho$. Elements of~$\{E_1^{0,1}(\mathcal{Z_\rho})\}$ are called
two-component conservation laws of type~$\rho$. To any of them a covering
over~$\mathcal{E}$ is associated and this covering is nontrivial if and only
if~$\rho\neq 0$.

\section*{Acknowledgments}
\label{sec:acknowledgements}

Computations were done using the \textsc{Jets} software~\cite{Jets}. The
author is indebted to O.~Morozov for discussions and valuable comments.

\appendix

\section{Explicit formulas for symmetries}
\label{sec:expl-form-symm}

Note fist that System~\eqref{eq:19} covers the first equation by $(u,v)
\mapsto u$. The symmetries $\phi_1,\dots,\phi_6$ are the lifts from
\begin{equation*}
  u_{yy} = u_{xt} - u_xu_{xy} + u_yu_{xx}
\end{equation*}
and are of the form
\begin{align*}
  \phi_1^u
  & = yu_x - 2x,\\
  \phi_1^v
  & = yv_x,\\[3pt]
  \phi_2^u
  & = 2xu_x + yu_y - 3u,\\
  \phi_2^v
  & = 2xv_x + yv_y,\\[3pt]
  \phi_3^u[\vartheta]
  & = \vartheta,\\
  \phi_3^v[\vartheta]
  & = 0,\\[3pt]
  \phi_4^u[\vartheta]
  & = u_x\vartheta - y\vartheta_t,\\
  \phi_4^v[\vartheta]
  & = v_x\vartheta,\\[3pt]
  \phi_5^u[\vartheta]& = u_y\vartheta + (yu_x - x)\vartheta_t -
                       \frac{1}{2}y^2\vartheta_{tt},\\ 
  \phi_5^v[\vartheta]
  & = v_y\vartheta + yv_x\vartheta_t,\\[3pt]
  \phi_6^u[\vartheta]
  & = u_t\vartheta + (xu_x + yu_y - u)\vartheta_t
    + \frac{1}{3}(y^2u_x - 2xy)\vartheta_{tt} - \frac{1}{6}y^3\vartheta_{ttt},\\ 
  \phi_6^v[\vartheta]
  & = v_t\vartheta + (v_x x + v_y y + 2
    v)\vartheta_t + \frac{1}{2}y^2v_x\vartheta_{tt}. 
\end{align*}
The symmetries $\phi_7,\dots,\phi_{18}$ have zero shadows: their
$u$-components vanish, while the $v$-components are
\begin{align*}
  \phi_7^v& = v,\\
  \phi_8^v& = u_{xxx},\\
  \phi_9^v& =  u_xu_{xxx} + u_{xxy} +\frac{1}{2}u_{xx}^2,\\
  \phi_{10}^v& = u_{xxt} + u_x u_{xxy} + u_xu_{xx}^2 + u_{xy}u_{xx} + (u_x^2 +
               u_y) u_{xxx},\\ 
  \phi_{11}^v& = u_x(u_x^2 + 2 u_y)u_{xxx} + (u_x^2 + u_y) u_{xxy} +
               2u_xu_{xxt} + \frac{1}{2}(3 u_x^2 + 2 u_y)u_{xx}^2\\
          &+ (2u_xu_{xy} + u_{xt})u_{xx} + \frac{1}{2}u_{xy}^2 + u_{xyt},\\
  \phi_{12}^v& = (3u_x^2 + 2 u_y)u_{xxt} + u_x (u_x^2 + 2u_y)u_{xxy} + u_{xtt}
               + (2u_x^3 + 3u_xu_y)u_{xx}^2 \\
          &+ u_{xx}(u_{yt} + 3u_xu_{xt} + (3u_x^2 + 2u_y)u_{xy}) + (u_x^4 + 3 u_x^2
            u_y + u_y^2)u_{xxx}\\
          &+
            u_xu_{xy}^2 + u_{xt}u_{xy} + 2u_xu_{xyt},\\ 
  \phi_{13}^v& = (u_x^4 + 3u_x^2 u_y + u_y^2)u_{xxy}
               + (48u_x^3 + 6u_xu_y)u_{xxt} + 3u_xu_{xtt} + u_{ytt}\\
          &+ \frac{1}{2}(5u_x^4 + 12u_x^2 u_y + 3u_y^2)u_{xx}^2\\
            &+((4u_x^3 + 6u_xu_y)u_{xy} + u_{xt} (6u_x^2 + 3u_y) + 3u_{yt}
 u_x)u_{xx} \\
          &+ (u_x^5 + 4u_x^3 u_y + 3u_x u_y^2)u_{xxx} +(3u_x^2 + 2u_y)u_{xyt}
            + \frac{1}{2}(3u_x^2 + 2u_y)u_{xy}^2\\
  &+ u_{xy}(3u_xu_{xt} + u_{yt}) + \frac{3}{2}u_{xt}^2,\\
  \phi_{14}^v& = (5u_x^4 + 12u_x^2u_y + 3u_y^2)u_{xxt}
               + u_x(u_x^2 + u_y)(u_x^2 + 3u_y)u_{xxy} \\
  &+ (6u_x^2 + 3u_y)u_{xtt} + 3u_xu_{ytt} + (3u_x^5 + 10u_x^3u_y +
    6u_xu_y^2)u_{xx}^2\\
          &+ ((5u_x^4 + 12u_x^2u_y + 3u_y^2)u_{xy} + (10u_x^3 + 12u_xu_y)u_{xt}
            + 3(2u_x^2 + u_y)u_{yt})u_{xx}\\
  &+ (u_x^6 + 5u_x^4u_y + 6u_x^2u_y^2 + u_y^3)u_{xxx} + (4u_x^3 +
    6u_xu_y)u_{xyt}
    + u_{ttt}\\
  &+ (2u_x^3 + 3u_xu_y)u_{xy}^2 + u_{xy}(3u_{yt}u_x + u_{xt}(6u_x^2 + 3 u_y))
    + 6 u_xu_{xt}^2 + 3u_{yt}u_{xt},\\ 
  \phi_{15}^v[\vartheta]&= \vartheta,\\
  \phi_{16}^v[\vartheta]&= u_x\vartheta + \frac{1}{2}y\vartheta_{t},\\
  \phi_{17}^v[\vartheta]&= \left(\frac{3}{2}u_x^2 + u_y\right)\vartheta
                          + \left(y u_x + \frac{1}{2}x\right)\vartheta_{t}
                          + \frac{1}{4}y^2\vartheta_{tt},\\
  \phi_{18}^v[\vartheta]&= (2u_x^3 + 3u_xu_y + u_t)\vartheta
                          + \left(\frac{3}{2}yu_x^2 + xu_x + yu_y
                          + \frac{1}{2}u\right)\vartheta_{t}\\
          &+ \frac{1}{2}y(yu_x + x)\vartheta_{tt}
            + \frac{1}{12}y^3\vartheta_{ttt},
\end{align*}
where $\vartheta$ is an arbitrary function in~$t$.

\section{Explicit formulas for lifts}
\label{sec:expl-form-lifts}

The $w_i$-components, $i=1,\dots,4$, of~$\Ev_{\phi_1}$ are
\begin{align*}
  \phi_1^{w_1}
  &= 2u_xpp_x + pp_y,\\
  \phi_1^{w_2}
  &= -v_xpp_x - u_xpq_x - 2u_xp_xq + pq_y + 2p_yq,\\
  \phi_1^{w_3}
  &= -\frac{1}{2}ypp_{yt} + \frac{1}{2}(2yu_{xx} - 1)pp_t
    + \frac{1}{2}(3u_x^2 - 2yu_{xt} - 2u_y)pp_x + yu_xp_tp_x\\
  &+ u_xpp_y + \frac{1}{2}yp_tp_y,\\
  \phi_1^{w_4}
  &= -(yv_{xt} - 2v_y)pp_x + (yu_{xx} - 2)pq_t - (yu_{xt} + u_y)pq_x
    + (2yu_{xx} - 1)p_tq\\
  &- (2yu_{xt} - u_y)p_xq + yu_xp_tq_x - yu_xp_xq_t - 2yv_xp_tp_x + yv_{xx}p
    p_t\\
  &+ u_xpq_y + 2u_xp_yq + v_xpp_y - 3u_{xy}pq + ypq_{yt} + 2yp_{yt}q +
    2yp_tq_y + yp_yq_t. 
\end{align*}
The $w_i$-components of~$\Ev_{\phi_2}$ are
\begin{align*}
  \phi_2^{w_1}
  &= -u_ypp_x - u_xpp_y + 2u_xp_xw_1 - pp_t + p_yw_1,\\
  \phi_2^{w_2}
  &= 2(3u_x^2 - u_y)p_xq + 2u_xpq_y - 2u_xp_yq + 2u_xp_xw_2 -
    u_xq_xw_1  - u_ypq_x\\
  &- v_xpp_y - v_xp_xw_1 + 2v_ypp_x - pq_t - 2p_tq - 2p_yw_2 + q_yw_1,\\
  \phi_2^{w_3}
  &= -\frac{1}{2}(u_x^2 - yu_{xt} - 2u_y)pp_y - \frac{1}{2}y(2 u_x^2 +
    u_y)p_tp_x
    + \frac{1}{2}(2yu_{xx} + 1)p_tw_1\\
  &+ \frac{1}{2}(3u_x^2 - 2yu_{xt})p_xw_1 - (yu_{xy} + u_x)pp_t +
    \frac{1}{2}yu_{yt}pp_x + \frac{1}{2}yu_y pp_{xt} \\
  &- \frac{1}{2}yu_xpp_{yt} - \frac{3}{2}yu_xp_tp_y - u_{xy}pw_1 +
    \frac{1}{2}ypp_{tt}  - \frac{1}{2}yp_{yt}w_1 - \frac{1}{2}yp_yw_{1,t}\\
  &+ u_xp_yw_1 + \frac{1}{2}pw_{1,t},\\ 
  \phi_2^{w_4}
  &= u_xq_yw_1 + yq_yw_{1,t} - 2(yu_{xx} - 1)p_tw_2
    - (6yu_xu_{xx} + 2yu_{xy} - u_x)p_tq \\
  &- (3u_yv_x - 2yv_{yt})pp_x - (4u_x^2 + 2yu_{xt} + u_y)p_yq
    - (yv_{xy} + v_x)pp_t \\
  &+ (2u_x - yu_{xy})pq_t - (u_xv_x + yv_{xt} + 2v_y)pp_y + (yu_{xx} -
    1)q_tw_1\\
  &+ (u_x^2 + 2yu_{xt} + u_y)pq_y + 3(u_xu_{xy} + u_yu_{xx} + u_{xt})pq\\
  &+ (6yu_xu_{xt} - 3u_xu_y - 2yu_{yt})p_xq - 2yp_{yt}w_2 - v_{xy}pw_1
    + 2 u_{xy}pw_2\\
  &- 2yp_yw_{2,t} - u_{xy}qw_1 + yq_{yt}w_1 - ypq_{tt} + v_xp_yw_1 - 2yp_{tt}q
    - 3yp_tq_t \\
  &- 2u_xp_yw_2 + yv_{xx}p_tw_1 - 2yu_yp_{xt}q - yu_{yt}pq_x - y(u_x^2 +
    2u_y)p_tq_x\\
  &- yu_{xt}q_xw_1 + 2yu_{xt}p_xw_2 - 4yu_xp_{yt}q - yv_{xt}p_xw_1
    + y(u_x^2 - u_y)p_xq_t\\
  &+ yu_xpq_{yt} - 2yv_xpp_{yt} - yu_ypq_{xt} + 2yv_ypp_{xt} - qw_{1,t} +
    2pw_{2,t}\\
  &- 3yu_xp_yq_t + 4y(u_xv_x + v_y)p_tp_x.
\end{align*}

\section{``Conventional'' presentation}
\label{sec:conv-pres}

In conventional notation the second operator from
Section~\ref{sec:recurs-oper-their} is
\begin{align*}
  &u_{xx}D_x^2(\tilde{\phi}^u) - u_{xxx}D_x(\tilde{\phi}^u)\\
  &\qquad =(u_x^2 + u_y) u_{xx}D_x^2(\phi^u) + u_xu_{xx}D_xD_y(\phi^u) +
    u_{xx}D_xD_t(\phi^u)\\ 
  &\qquad- (u_x^2 + u_y)u_{xxx}D_x(\phi^u) - u_xu_{xxx}D_y(\phi^u) -
    u_{xxx}D_t(\phi^u)\\ 
  &\qquad+ (u_{xy}u_{xxx} - u_{xx}u_{xxy})\phi^u,\\
  &u_{xx}D_xD_y(\tilde{\phi}^u) - u_{xxy}D_x(\tilde{\phi}^u)\\
  &\qquad= u_xu_yu_{xx}D_x^2(\phi^u) + u_yu_{xx}D_xD_y(\phi^u) +
    u_xu_{xx}D_xD_t(\phi^u) + 
    u_{xx}D_yD_t(\phi^u)\\
  &\qquad+ \big(u_{xt}u_{xx} - (u_x^2 + u_y)u_{xxy}\big)D_x(\phi^u) -
    u_xu_{xxy}D_y(\phi^u)  \\
  &\qquad- (u_{xx}^2 + u_{xxy}) D_t(\phi^u)
    + \big((u_xu_{xx} + u_{xy})u_{xxy} - u_{xx}(u_yu_{xxx} + u_{xxt})\big)\phi^u,\\
  &u_{xx}D_xD_t(\tilde{\phi}^u) - (u_{xx}u_{xy} + u_{xxt})D_x(\tilde{\phi}^u)
    + u_{xx}^2D_y(\tilde{\phi}^u)\\ 
  &\qquad= (u_x^2 + u_y)u_{xx}D_xD_t(\phi^u) + u_xu_{xx}D_yD_t(\phi^u) +
    u_{xx}D_tD_t(\phi^u)\\ 
  &\qquad- \big((u_x^2 + u_y)u_{xxt} + u_{xx}
    ((u_x^2 + u_y)u_{xy} - u_yu_xu_{xx} - 2u_xu_{xt} - u_{yt})\big)D_x(\phi^u) \\
  &\qquad- \big(u_xu_{xxt} + u_{xx}(u_xu_{xy} - u_yu_{xx} -
    u_{xt})\big)D_y(\phi^u)\\
  &\qquad+ \big(u_{xxt} + u_{xx}(u_xu_{xx} + 2u_{xy})\big) D_t(\phi^u)\\
  &\qquad+ \big(u_{xy}u_{xxt} - u_{xx}(u_yu_{xx}^2 + (u_{xt} - u_xu_{xy})u_{xx} +
    u_{xyt} - u_{xy}^2)\big)\phi^u,\\
  &u_{xx}v_{xx}D_x^2(\tilde{\phi}^u) - u_{xx}^2D_x^2(\tilde{\phi}^v)  +
    u_{xx}u_{xxx}D_x(\tilde{\phi}^v) 
    + (u_{xx}v_{xxx} - 2v_{xx}u_{xxx})D_x(\tilde{\phi}^u)\\
  &\qquad =\big((4u_xv_x + 2v_y)u_{xx}^2 + (u_x^2 +
    u_y)v_{xx}u_{xx}\big)D_x^2(\phi^u) \\
   &\qquad + (u_xu_{xx}v_{xx} + 2v_xu_{xx}^2)D_xD_y(\phi^u) +
     v_{xx}u_{xx}D_xD_t(\phi^u)\\ 
  &\qquad+ \big(6v_xu_{xx}^3 + (6u_xv_{xx} + 3v_{xy})u_{xx}^2 +
    ((u_x^2 + u_y)v_{xxx}-2(2u_xv_x + v_y)u_{xxx})u_{xx}\\
  &\qquad- 2(u_x^2 + u_y)v_{xx}u_{xxx}\big)D_x(\phi^u) \\
    &\qquad+ \big(3v_{xx}u_{xx}^2 + (u_xv_{xxx} - 2v_xu_{xxx})u_{xx} -
    2u_xv_{xx}u_{xxx}\big) D_y(\phi^u)\\
  &\qquad+ (u_{xx}v_{xxx} - 2v_{xx}u_{xxx})D_t(\phi^u)\\
  &\qquad+ \big(u_{xx}^2v_{xxy} - v_{xx}u_{xx}u_{xxy}
    - (v_{xy}u_{xxx} + u_{xy}v_{xxx})u_{xx} + 2v_{xx}u_{xy}u_{xxx}\big) \phi^u\\
  &\qquad- (u_x^2 + u_y)u_{xx}^2D_x^2(\phi^v) - u_xu_{xx}^2D_xD_y(\phi^v) -
    u_{xx}^2D_xD_t(\phi^v)\\ 
  &\qquad+ \big((u_x^2 + u_y)u_{xx}u_{xxx}-6u_xu_{xx}^3 -
    3u_{xx}^2u_{xy}\big)D_x(\phi^v)  \\
  &\qquad+ (u_xu_{xx}u_{xxx} - 3u_{xx}^3)D_y(\phi^v) +
    u_{xx}u_{xxx}D_t(\phi^v)\\
  &\qquad+ (2u_{xx}u_{xy}u_{xxx}-6u_{xx}^4 - 2u{_xx}^2 u_{xxy}) \phi^v ,\\
  &u_{xx}v_{xx}D_xD_y(\tilde{\phi}^u) - u_{xx}^2D_xD_y(\tilde{\phi}^v)
    + (u_{xx}v_{xxy} - 2v_{xx}u_{xxy})D_x(\tilde{\phi}^u) +
    u_{xx}u_{xxy}D_x(\tilde{\phi}^v)\\ 
  &\qquad= u_{xx}\big(2(u_xv_y + u_yv_x)u_{xx} + u_xu_yv_{xx}\big)D_x^2(\phi^u)
    + u_{xx}(2v_yu_{xx} + u_yv_{xx}) D_xD_y(\phi^u)\\
  &\qquad+ u_x{x} (u_xv_{xx} + 2v_xu_{xx})D_xD_t(\phi^u) +
    u_{xx}v_{xx}D_yD_t(\phi^u)\\ 
  &\qquad+ \big(u_{xx}((u_x^2 + u_y)v_{xxy}
    - (3u_xv_{xy} + 6v_xu_{xy} + 3u_yv_{xx} + 2v_{xt})u_{xx} - v_{xx}u_{xt})\\
  &\qquad-2((2u_xv_x + v_y)u_{xx} + (u_x^2 + u_y)v_{xx})u_{xxy}\big) D_x(\phi^u) \\
  &\qquad- \big(2(u_xv_{xx} + v_xu_{xx})u_{xxy} - u_{xx}(u_xv_{xxy} +
    3u_{xx}v_{xy})\big)D_y(\phi^u)\\
  &\qquad+ (u_{xx}v_{xxy} - 2v_{xx}u_{xxy})D_t(\phi^u)
    + \big((2v_xu_{xx}^2 + (u_xv_{xx} - v_{xy})u_{xx} +
    2u_{xy}v_{xx})u_{xxy}\\
  &\qquad- u_{xx}((u_xu_{xx} + u_{xy})v_{xxy} +
    v_{xx}u_{xxt} - u_{xx}v_{xxt} + 3v_{xy}u_{xx}^2\\
  &\qquad+ (2v_yu_{xxx} - 3u_{xy}v_{xx} - u_yv_{xxx})u_{xx} +
    u_yv_{xx}u_{xxx})\big) \phi^u \\ 
  &\qquad- u_yu_xu_{xx}^2D_x^2(\phi^v) - u_xu_{xx}^2D_xD_t(\phi^v) -
    u_yu_{xx}^2D_xD_y(\phi^v) -  
    u_{xx}^2D_yD_t(\phi^v)\\ 
  &\qquad+ \big((u_x^2 + u_y)u_{xx}u_{xxy}
    - u_{xx}(3u_yu_{xx}^2 + (3u_xu_{xy} + u_{xt})u_{xx})\big)D_x(\phi^v)\\
  &\qquad+ (u_xu_{xx}u_{xxy} - 3u_{xx}^2u_{xy})D_y(\phi^v)
    + (u_{xx}u_{xxy} - 2u_{xx}^3)D_t(\phi^v)\\
  &\qquad+ \big((2u_xu_{xx}^2 + 2u_{xx}u_{xy})u_{xxy}
    - u_{xx}(6u_{xx}^2u_{xy} + 2u_yu_{xx}u_{xxx} + 2u_{xx}u_{xxt})\big)\phi^v,\\
  &u_{xx}v_{xx}D_xD_t(\tilde{\phi}^u) -
    (2v_{xx}u_{xxt} + (2u_{xx}v_{xy} - 2u_{xy}v_{xx} -
    v_{xxt})u_{xx})D_x(\tilde{\phi}^u)\\ 
  &\qquad- u_{xx}^2D_xD_t(\tilde{\phi}^u) + (u_{xx}u_{xxt} -
    2u_{xx}^2u_{xy})D_x(\tilde{\phi}^v) + 
    2u_{xx}^3D_y(\tilde{\phi}^v)\\
  &\qquad= u_{xx}\big((4u_xv_x + 2v_y)u_{xx} + (u_x^2 +
    u_y)v_{xx}\big)D_xD_t(\phi^u)\\ 
    &\qquad+ u_{xx}(u_xv_{xx} + 2v_xu_{xx})D_yD_t(\phi^u) +
      u_{xx}v_{xx}D_tD_t(\phi^u)\\ 
  &\qquad+ \big((u_x^2 + u_y) u_{xx}v_{xxt}
    -2((2u_x v_x + v_y) u_xx + (u_x^2 + u_y)v_{xx}) u_{xxt}\\
  &\qquad -4(u_xv_y + u_yv_x)u_{xx}^3\\
  &\qquad+ 2(4u_xv_xu_{xy}-u_x^2v_{xy} + 2u_xv_{xt} + 2v_xu_{xt} + 2v_yu_{xy}  - u_y
    v_{xy} + v_{yt})u_{xx}^2\\
  &\qquad+ (2u_x^2u_{xy}v_{xx} + 2u_xu_{xt}v_{xx} + 2u_yu_{xy}v_{xx} + u_{yt}v_{xx})
    u_{xx} \big)D_x(\phi^u)\\
  &\qquad+ \big(u_xu_{xx}v_{xxt} - 2(u_xv_{xx} + v_xu_{xx})u_{xxt}
    - 4v_yu_{xx}^3 +
    2(2v_xu_{xy} -u_xv_{xy} + v_{xt})u_{xx}^2\\
  &\qquad+ (2u_xu_{xy}v_{xx} + u_{xt}v_{xx}) u_xx\big) D_y(\phi^u)\\
  &\qquad+(u_{xx}v_{xxt} - 2v_{xx}u_{xxt}-2v_xu_{xx}^3 - u_{xx}^2 v_{xy} +
    u_{xx}u_{xy}v_{xx}) D_t(\phi^u) \\
  &\qquad+
    \big((2u_{xy}v_{xx}-u_{xx}v_{xy})u_{xxt} - u_{xx} u_{xy} v_{xxt}
    + 4 v_yu_{xx}^4\\
  &\qquad+
    2(u_xv_{xy} - 2v_xu_{xy} - u_yv_{xx} - v_{xt})u_{xx}^3
    + (4u_{xy} v_{xy} + v_{xyt})u_{xx}^2 \\
  &\qquad-(2 u_{xy}^2 v_{xx}+ v_{xx}u_{xyt})u_{xx}\big)\phi^u
    -u_{xx}^2 (u_x^2 + u_y)D_xD_t(\phi^v) -u_xu_{xx}^2 D_yD_t(\phi^v) \\
  &\qquad-u_{xx}^2 D_t^2(\phi^v) + \big((u_x^2 + u_y)u_{xx}u_{xxt} +
    2u_xu_yu_{xx}^3\\ 
    &\qquad- (2 u_x^2u_{xy} + 2 u_xu_{xt} + 2u_yu_{xy} + u_{yt})u_{xx}^2\big)
      D_x(\phi^v)\\ 
  &\qquad+ \big(u_xu_{xx} u_{xxt} + 2u_yu_{xx}^3
    - (2u_xu_{xy} + u_{xt})u_{xx}^2\big)D_y(\phi^v)\\
    &\qquad+ ( u_{xx} u_{xxt}-2 u_x u_{xx}^3 - 4 u_{xx}^2 u_{xy}) D_t(\phi^v)\\
  &\qquad+ 2\big(u_{xy}u_{xx}u_{xxt} + 2u_yu_{xx}^4 - (2u_xu_{xy} +
    u_{xt})u_{xx}^3 - (2u_{xy}^2 + u_{xyt}) u_{xx}^2\big) \phi^v. 
\end{align*}

\section{Nonlocal shadows generated by $\mathcal{R}_1$
  and $\mathcal{R}_2$} 
\label{sec:high-order-cons}

Consider the covering over $\mathcal{E}$ with the nonlocal variables
$z_1,\dots, z_4$ defined by
\begin{align*}
  z_{1,x} &= u_x^2 + u_y,\\
  z_{1,y} &=  u_x u_y + u_t;  \\[3pt]
  z_{2,x} &= u_x v_x - v_y,\\
  z_{2,y} &= 2 u_x v_y - u_y v_x - v_t;    \\[3pt]
  z_{3,x} &= (2 u_x u_{xt} + u_{yt}) y - u_x^3 - 2 u_x u_y - u_t,\\
  z_{3,y} &= (u_x u_{yt} +
            u_{xt} u_y + u_{tt}) y - u_x^2 u_y - u_y^2;    \\[3pt]
  z_{4,x} &= (u_x v_{xt} + u_{xt} v_x - v_{yt})y - u_xv_y - u_yv_x + v_t,\\
  z_{4,y} &=
            u_x^2 v_y - u_y v_y - u_x u_y v_x + (2 u_xv_{yt} + 2 u_{xt}
            v_y - u_yv_{xt} - u_{yt} v_x - v_{tt}) y.
  \end{align*}
Then the nonlocal shadows mentioned in Section~\ref{sec:recurs-oper-their} are
\begin{align*}
  \nu_1^u
  &=-3 u u_x - yu_t - 2 xu_y + 4 z_1,\\
  \nu_1^v
  &=3 u v_x - yv_t - 2 xv_y - z_2;   \\[3pt]
  \nu_2^u[\vartheta]
  &= \frac{1}{24}y^4\vartheta_{tttt}
    - \frac{1}{6}y^2(yu_x - 3x)\vartheta_{ttt}
    - \frac{1}{2}(2u_x x y + u_y y^2 - 2u y - x^2)\vartheta_{tt}\\
  &- (uu_x + yu_t + xu_y - z_1)\vartheta_{t}
    + (u_x u_t - z_{1,t}) \vartheta,\\
  \nu_2^v[\vartheta]
  &=-\frac{1}{6}y^3v_x\vartheta_{ttt}
    - \frac{1}{2}y(2xv_x + yv_y + 4v)\vartheta_{tt}\\
  &- (u v_x + 4 vu_x + yv_t + xv_y - 2z_2)\vartheta_{t}
    + (u_t v_x - 2 u_x v_t + z_{2, t}) \vartheta;   \\[3pt]
  \nu_3^u
  &=0,\\
  \nu_3^v
  &=2vu_x + z_2;\\[3pt]
  \nu_4^u
  &=0,\\
  \nu_4^v
  &= -(u_x^7 + 6 u_x^5 u_y + 10 u_x^3 u_y^2 + 4 u_x u_y^3)u_{xxx}\\
  &- (u_x^6 + 5 u_x^4 u_y + 6 u_x^2 u_y^2 + u_y^3)u_{xxy}
        - (6 u_x^5 + 20 u_x^3 u_y + 12 u_x u_y^2)u_{xxt}\\
  &- (5u_x^4 + 12u_x^2u_y + 3u_y^2)u_{xyt} - (10u_x^3 + 12u_xu_y)u_{xtt}\\
  &- (6u_x^2 + 3u_y)u_{ytt} - (4 u_x - 2 u_{xx})u_{ttt}\\
  &- (\frac{7}{2} u_x^6 + 15 u_x^4 u_y + 15 u_x^2 u_y^2 + 2 u_y^3)u_{xx}^2 \\
  &- (6 u_x^5 + 20 u_x^3 u_y + 12 u_x u_y^2)u_{xx}u_{xy} \\
  &-(15 u_x^4 u_{xt} + 10 u_x^3 u_{yt} + 30 u_x^2u_yu_{xt} -
        12u_xu_yu_{yt} - 6 u_y^2u_{xt})u_{xx} \\ 
  &- (\frac{5}{2}u_x^4 + 6u_x^2u_y + \frac{3}{2}u_y^2)u_{xy}^2 \\
  &- (10u_x^3u_{xt} + 6u_x^2u_{yt} - 12u_xu_yu_{xt}  - 3u_yu_{yt})u_{xy}
  \\
  & - 15u_x^2u_{xt}^2 - 12u_xu_{yt}u_{xt} - 6u_yu_{xt}^2 -
        \frac{3}{2}u_{yt}^2- z_{1,ttt};\\[3pt]
  \nu_5^u[\vartheta]
  &=0,\\
  \nu_5^v[\vartheta]
  &= -\frac{1}{48}y^4\vartheta_{tttt} - (\frac{1}{6}y^3u_x  +
    \frac{1}{4}xy^2)\vartheta_{ttt}\\
  &- (\frac{3}{4} y^2u_x^2 - xyu_x - \frac{1}{2}y^2u_y - \frac{1}{2}yu -
    \frac{1}{4}x^2)\vartheta_{tt} \\ 
  &- (2 y u_x^3 + \frac{3}{2}xu_x^2 + (3u_y y + u)u_x + yu_t
    + xu_y + \frac{1}{2}z_1)\vartheta_{t} \\
  &- (\frac{5}{2}u_x^4 + 6u_x^2u_y + 2u_tu_x + \frac{3}{2}u_y^2 + z_{1,t})
    \vartheta;\\[3pt]
  \nu_6^u
  &= 3 uu_y + 2 x u_t + (2 u_x - 8)z_1 - y(u_xu_t + 2 z_{1, t}),\\
  \nu_6^v
  &=3uv_y + 2xv_t + 2v_xz_1 + (4u_x - 1)z_2 + y(2v_tu_x - u_tv_x + 2z_{2, t});\\
  \nu_7[\vartheta]
  &=\frac{1}{120}y^5\vartheta_{ttttt}
    + \frac{1}{24}y^3(yu_x - 4x)\vartheta_{tttt} \\
  &+ \frac{1}{2}(xy^2u_x + \frac{1}{3}y^3u_y - y^2u - yx^2)\vartheta_{ttt}
  \\
  &+ (\frac{1}{2}(2yu + x^2)u_x + \frac{1}{2}y^2u_t + x(yu_y -
    u))\vartheta_{tt} \\
  &+ (uu_y + u_t - xyu_tu_x - z_1)\vartheta_{t} -
    (u_y u_t - z_{1,t})\vartheta \\
  & + \frac{5}{2}yz_{2,t}- \frac{5}{2}u_xz_3,\\
  \nu_7[\vartheta]
  &=\frac{1}{24}y^4v_x\vartheta_{tttt}
    + y^2(\frac{1}{2}v_x x + \frac{1}{6}yv_y + v)\vartheta_{ttt}\\
  &+ ((yu + \frac{1}{2} x^2)v_x + 4yvu_x + \frac{1}{2}y^2v_t +
    xyv_y)\vartheta_{tt} \\
  &+ (6vu_x^2 - yu_tv_x + 2yu_xv_t +
    u v_y + 4v u_y + xv_t - 2z_2)\vartheta_{t} \\
  &+ (yv_xu_{tt} + 3u_x^2v_t + yu_xv_{tt} - u_tv_y + 2u_yv_t -
    z_{2,t})\vartheta \\
  &- \frac{5}{2}v_xz_3 + 4u_xz_4 + 2yz_{4,t};\\[3pt]
  \nu_8^u
  &=0,\\
  \nu_8^v
  &=3vu_x^2 + 2vu_y - 2u_xz_2 - yz_{2,t} + z_4;\\[3pt]
  \nu_9^u
  &=0,\\
  \nu_9^v
  &=-4yu_xu_{tttt}
    + (u_x^8 + 7u_x^6u_y + 15u_x^4u_y^2 + 10u_x^2u_y^3 + u_y^4)u_{xxx}\\
  &+ (u_x^7 + 6 u_x^5 u_y + 10 u_x^3 u_y^2 + 4 u_x u_y^3)u_{xxy}\\
  &+ (7 u_x^6 + 30 u_x^4 u_y + 30 u_x^2 u_y^2 + 4 u_y^3)u_{xxt}\\
  &+ (6 u_x^5 + 20 u_x^3 u_y + 12 u_x u_y^2)u_{xyt}
    + (15 u_x^4 + 30 u_x^2 u_y + 6 u_y^2)u_{xtt}\\
  &+ (10 u_x^3 + 12 u_x u_y)u_{ytt} + (3 u_x^2 + 2 u_y)u_{ttt}\\
  &+ (4 u_x^7 + 21 u_x^5 u_y + 30 u_x^3 u_y^2 + 10 u_x u_y^3)u_{xx}^2\\
  &+ (7 u_x^6 + 30 u_x^4 u_y + 30 u_x^2 u_y^2 + 4 u_y^3)u_{xx}u_{xy}\\
  &+ ((21 u_x^5 + 60 u_x^3 u_y + 30 u_x u_y^2))u_{xx}u_{xt}\\
  &+ (15 u_x^4 + 30 u_x^2 u_y + 6 u_y^2)u_{xx}u_{yt} + (3 u_x^5 + 10 u_x^3
    u_y + 6 u_x u_y^2)u_{xy}^2 \\
  &+ (15 u_x^4 + 30 u_x^2 u_y + 6 u_y^2)u_{xy}u_{xt}
    + (10 u_x^3 + 12 u_x u_y)u_{xy}u_{yt} \\
  &+ (30 u_x^3 + 30 u_x u_y)u_{xt}^2 + (30 u_x^2 + 12 u_y)u_{yt}u_{xt}
    + 6 u_x u_{yt}^2\\
  &+ yz_{1,tttt} + 2u_xz_{1, ttt} - z_{3, ttt};\\[3pt]
  \nu_{10}^u
  &=0,\\
  \nu_{10}^v
  &=\frac{1}{240}y^5\vartheta_{ttttt}
    + \frac{1}{24}y^3(yu_x + 2x)\vartheta_{tttt}\\
  &+ \frac{1}{4}y((u_x^2 + \frac{2}{3} u_y) y^2
    + (2xu_x + u) y + x^2)\vartheta_{ttt} \\
  &+ \frac{1}{2}\big((2 u_x^3 + 3 u_x u_y + u_t) y^2 +
    (3xu_x^2 + 2 u u_x + 2xu_y + z_1) y\\
  &+ x (xu_x + u)\big)\vartheta_{tt}
    + \big(\frac{3}{2} yz_{1,t} + \frac{5}{2}yu_x^4 +
    2 x u_x^3 + (6yu_y + \frac{3}{2} u)u_x^2 \\
  &+ (2yu_t + 3xu_y + z_1) u_x + \frac{3}{2}yu_y^2 + uu_y + x u_t -
    \frac{1}{2}z_3\big)\vartheta_t \\
  &+ \big(3u_x^5 + 10u_x^3u_y + 3u_tu_x^2 + 6u_xu_y^2 + 2u_xz_{1,t} +
    yz_{1,tt} \\
  &+ 2 u_t u_y - z_{3,t}\big)\vartheta
\end{align*}

\section{Lifts of $\phi_3,\dots,\phi_{14}$}
\label{sec:lifts-phi_3-}

The components~$\phi_i^p$, $\phi_i^q$, $i = 3,\dots,14$, are
\begin{align*}
  &\phi_3^p=0,
  &&\phi_3^q=0,\\
  &\phi_4^p=0,
  &&\phi_4^q=0,\\
  &\phi_5^p=f(v_x p_t p_x-u_x p_x q_t - p_x w_{2,t}
  &&\phi_5^q= f(u_x q_tq_x + v_xp_t q_x + q_x w_6 - q_xw_{2,t}),\\
  & \qquad+ p_xw_6),
  &&\\
  &\phi_6^p=-\frac{1}{2}f(2 u_xp_t p_x + p_x w_5 + p_x w_{1,t}),
  &&\phi_6^q=f
     \Big(v_x p_t p_x - u_x p_t q_x - u_x p_x q_t \\
  &&&\qquad+ p_xw_6 - \frac{1}{2}q_x w_5 - p_xw_{2,t} - \frac{1}{2}q_x
      w_{1,t}\Big),\\ 
  &\phi_7^p=0,
  &&\phi_7^q=0,\\
  &\phi_8^p=0,
  &&\phi_8^q=0,\\
  &\phi_9^p=0,
  &&\phi_9^q=0,\\
  &\phi_{10}^p=0,
  &&\phi_{10}^q=0,\\
  &\phi_{11}^p=0,
  &&\phi_{11}^q=0,\\
  &\phi_{12}^p=0,
  &&\phi_{12}^q=\frac{1}{2}f(2 u_xp_t p_x + p_x w_5 + p_x w_{1,t}),\\
  &\phi_{13}^p=0,
  &&\phi_{13}^q=0,\\
  &\phi_{14}^p=0,
  &&\phi_{14}^q=f(v_x p_t p_x - u_x p_x q_t - p_x w_{2,t} + p_x w_6).
\end{align*}
The lifts of $\phi_1$ and $\phi_2$ to $\tilde{\tau}$ are
\begin{align*}
  \phi_1^{w_5}
     &=-pp_{yt} + p_tp_y + 2u_{xx}pp_t - 2u_{xt}pp_x + 2u_xp_tp_x,\\
     \phi_1^{w_6}
     &=u_{xx}pq_t - u_xp_xq_t - 2v_xp_tp_x - v_{xt}pp_x + v_{xx}pp_t
         - u_{xt}pq_x + 2u_{xx}p_tq \\
      &+ pq_{yt} - 2u_{xt}p_xq + 2p_tq_y + 2p_{yt}q + p_yq_t + u_xp_tq_x;\\
        \phi_2^{w_5}
     &=-(2u_x^2 + u_y)p_tp_x + u_{xt}pp_y - 2u_{xt}p_xw_1 + 2u_{xx}p_tw_1
         - 2u_{xy}pp_t\\
     &+ u_{yt}pp_x + u_ypp_{xt} - u_xpp_{yt} - 3u_xp_tp_y + pp_{tt}
        - p_{yt}w_1 - p_yw_{1,t},\\
     \phi_2^{w_6}
     &=-4u_xp_{yt}q - v_{xt}pp_y + 2v_ypp_{xt} - u_ypq_{xt} - 2v_xpp_{yt}
         + 2u_{xt}pq_y - u_{yt}pq_x\\
     &+ 2v_{yt}pp_x + u_xpq_{yt} - u_{xy}pq_t + 2(3u_{xt}u_x - u_{yt})p_xq
        + u_{xx}q_tw_1 + q_yw_{1,t}\\
     &- 2p_{tt}q - 3p_tq_t - 2u_{xx}p_tw_2 - 2(3u_xu_{xx} + u_{xy})p_tq
        - 2u_{xt}p_yq - 2p_yw_{2,t}\\
     &- 2p_{yt}w_2 + q_{yt}w_1 - pq_{tt} + 4(u_xv_x + v_y)p_tp_x
        - (u_x^2 + 2 u_y)p_tq_x \\
     &- v_{xt}p_xw_1 + (u_x^2 - u_y) p_x q_t + 2u_{xt}p_xw_2 - u_{xt}q_xw_1
        + v_{xx}p_tw_1\\
     &- 2u_yp_{xt}q - 3u_xp_yq_t - v_{xy}pp_t.
\end{align*}
Finally, the lifts of~$\phi_3,\dots,\phi_{14}$ to $\tilde{\tau}$ are
\begin{align*}
  \phi_3^{w_i}
  &=0,\quad i=1,\dots,6;\\
  \phi_4^{w_i}
  &=0,\quad i=1,\dots,6;\\
  \phi_5^{w_1}
  &=f(2u_xv_xp_tp_x - 2u_x^2p_xq_t - 2u_xp_xw_{2,t} + 2u_xp_xw_6 +
    v_xp_tp_y \\
  &- u_x p_y q_t + p_y w_6 - p_y w_{2,t}),\\
  \phi_5^{w_2}&=-f(u_xv_xp_tq_x + u_x^2q_tq_x + v_x^2p_tp_x - v_xu_x
                p_xq_t - u_xq_xw_{2,t} - v_xp_xw_{2,t}\\
  &+ (u_xq_x + v_xp_x - q_y)w_6 - u_xq_tq_y - v_xp_tq_y + q_yw_{2,t},\\
  \phi_5^{w_3}
  &=\frac{1}{2}\dot{f}(y(u_xp_yq_t - v_xp_tp_y - p_yw_6 + p_yw_{2,t}) - v_xpp_t
    - u_x p q_t + pw_6 - pw_{2,t})\\
  &- f(\frac{1}{2}y(v_xp_{tt}p_y + p_{yt}w_6) - u_xp_yw_6
    + (u_xp_y - u_xy p)w_{2,t}\\
  &+ (yu_{xt} - \frac{3}{2} u_x^2)p_xw_6 - (\frac{1}{2} + yu_{xx})p_tw_6
    + (\frac{3}{2}u_x^2 - yu_{xt})p_xw_{2,t}\\
  &+ (yu_{xx} + \frac{1}{2})p_tw_{2,t}
    + (\frac{3}{2} u_x^3 - yu_{xt}u_x)p_xq_t
    + (u_x^2 - \frac{1}{2}yu_{xt})p_yq_t \\
  &+ (\frac{1}{2}yv_{xt} - u_xv_x)p_tp_y
    + (\frac{1}{2}u_{xt} - u_xu_{xy})pq_t
    + (\frac{1}{2}v_{xt} - u_{xy}v_x)pp_t + \frac{1}{2}u_xpq_{tt}\\
  &+ u_{xy}pw_6 + \frac{1}{2}v_xpp_{tt} + u_x(yu_{xx} + \frac{1}{2})p_tq_t
    + v_x(yu_{xt} - \frac{3}{2}u_x^2)p_tp_x - \frac{1}{2}yu_xp_yq_{tt}\\
  &+ \frac{1}{2}(yv_xp_tp_{yt} - yu_xp_{yt}q_t + pw_{2,tt} - pw_{6,t}
    - yp_{yt}w_{2,t} - yp_yw_{2,tt} + yp_y w_{6,t})),\\
  \phi_5^{w_4}
  &=\dot{f} (yu_xq_tq_y + yv_xp_tq_y + u_xq q_t - v_xp_tq + yq_yw_6
    + (q - yq_y)w_{2,t} - qw_6)\\
  &+ f(t)((u_xu_{xy} + u_{xt})qq_t -(yu_xv_{xx} - yu_{xx}v_xy + v_x)p_tq_t
    - (u_{xy}v_x + v_{xt})p_tq\\
  &- v_xp_{tt}q + u_xqq_{tt} - v_xp_yw_{2,t} - v_{xy}pw_6 - u_xq_yw_{2,t}
    - (yu_xx - 1)q_tw_{2,t} \\
  &+ u_xq_yw_6 + (u_{xy}q - yq_{yt})w_{2,t} + yq_{yt}w_6
    + (u_xv_x + yv_{xt})p_tq_y \\
  &+ v_x^2p_tp_y + (u_x^2 + yu_{xt})q_tq_y + v_{xy}pw_{2,t} + yq_yw_{6,t}
    + v_xp_yw_6 + (yu_{xx} - 1)q_tw_6 \\
  &- yq_yw_{2,tt} + yu_{xt}q_xw_{2,t} + qw_{2,tt} + yv_{xx}p_tw_6
    - v_xu_xp_yq_t + v_{xy}u_xpq_t + v_{xy}v_xpp_t\\
  &- qw_{6,t} - u_{xy}qw_6 - yv_xv_{xt}p_tp_x - yu_xu_{xt}q_tq_x
    - yu_{xt}v_xp_tq_x + yu_xv_{xt}p_xq_t \\
  &- yu_{xt}q_xw_6 + yv_{xt}p_xw_{2,t} - yv_{xt}p_xw_6 - yv_{xx}p_tw_{2,t}
    + yv_xp_tq_{yt} \\
  &+ yu_xq_tq_{yt} + yu_xq_{tt}q_y + yv_xp_{tt}q_y),\\
  \phi_5^{w_5}
  &=\dot{f}(-v_x p_t p_y + u_x p_y q_t - p_y w6 + p_y w2_t)\\
  &- 2f(u_{xt}v_xp_tp_x - u_xu_{xt}p_xq_t + u_{xx}u_xp_tq_t + u_{xt}p_xw_6
    - \frac{1}{2}u_{xt}p_yq_t - u_{xt}p_xw_{2,t} \\
  &- u_{xx}p_tw_6 + u_{xx}p_tw_{2,t} + \frac{1}{2}v_{xt}p_tp_y
    + \frac{1}{2}v_xp_tp_{yt} + \frac{1}{2}v_xp_{tt}p_y
    - \frac{1}{2}u_xp_yq_{tt}\\
  &- \frac{1}{2}u_xp_{yt}q_t + \frac{1}{2}p_{yt}w_6 +
    \frac{1}{2}p_yw_{6,t}
    - \frac{1}{2}p_yw_{2,tt} - \frac{1}{2}p_{yt}w_{2,t}),\\
  \phi_5^{w_6}
  &=\dot{f}(u_xq_tq_y + v_xp_tq_y + q_yw_6 - q_yw_{2,t})\\
  &- f((u_xv_{xx} - u_{xx}v_x)p_tq_t + u_{xt}v_xp_tq_x + u_xu_{xt}q_tq_x
    + v_xv_{xt}p_tp_x - u_xv_{xt}p_xq_t + u_{xt}q_xw_6\\
  &- u_{xt}q_tq_y - u_{xt}q_xw_{2,t} - u_{xx}q_tw_6 + u_{xx}q_tw_{2,t} +
    v_{xt}p_xw_6 - v_{xt}(p_tq_y - p_xw_{2,t})\\
  &- v_{xx}p_tw_6 + v_{xx}p_tw_{2,t} - v_xp_tq_{yt} - v_xp_{tt}q_y -
    u_xq_tq_{yt}\\
  &- u_xq_{tt}q_y - q_{yt}w_6 - q_yw_{6,t} + q_yw_{2,tt} +
    q_{yt}w_{2,t});\\
  \phi_6^{w_1}
  &=-f(2u_x^2p_tp_x + u_xp_tp_y + u_xp_xw_5 + u_xp_xw_{1,t}
    + \frac{1}{2}p_yw_5 + \frac{1}{2}p_yw_{1,t}),\\
  \phi_6^{w_2}
  &=f(u_x^2p_tq_x + u_x^2p_xq_t - u_xp_xw_6 + \frac{1}{2}u_x
    q_xw_5 - u_xp_tq_y - u_xp_yq_t + u_xp_xw_{2,t}\\
  &+ \frac{1}{2}u_xq_xw_{1,t} + \frac{1}{2}v_xp_xw_5 + v_xp_tp_y
    + \frac{1}{2}v_xp_xw_{1,t} + p_yw_6 - \frac{1}{2}q_yw_5 -
    p_yw_{2,t} - \frac{1}{2}q_yw_{1,t}),\\
  \phi_6^{w_3}
  &=\frac{1}{4}\dot{f}(2yu_xp_tp_y + 2u_xpp_t + yp_yw_5 + yp_yw_{1,t} -
    pw5 - pw_{1,t})\\
  &+ \frac{1}{2}f(t)(-(2u_x^2 - yu_{xt})p_tp_y - (2u_xu_{xy} - u_{xt})pp_t
    - \frac{1}{2}(2yu_{xx} + 1)p_tw_{1,t} \\
  &- \frac{1}{2}(3u_x^2 - 2yu_{xt})p_xw_{1,t} - u_xp_yw_{1,t} -
    u_x (3u_x^2 - 2yu_{xt})p_tp_x + u_{xy}pw_{1,t}\\
  &- \frac{1}{2}(2yu_{xx} + 1)p_tw_5 - \frac{1}{2}(3u_x^2 - 2yu_{xt})p_xw_5
    - \frac{1}{2}pw_{1,tt} - \frac{1}{2}pw_{5,t} + yu_xp_tp_{yt}\\
  &+ yu_xp_{tt}p_y + u_{xy}pw_5 + u_xpp_{tt} - u_xp_yw_5 +
    \frac{1}{2}yp_yw_{5,t}
    + \frac{1}{2}yp_yw_{1,tt} + \frac{1}{2}yp_{yt}w_{1,t}
    + \frac{1}{2}y p_{yt}w_5),\\
  \phi_6^{w_4}
  &=\frac{1}{2}\dot{f}(-2yu_xp_tq_y - 2yu_xp_yq_t + 2yv_xp_tp_y + 2u_xpq_t
    + 2u_xp_tq + 2v_xpp_t + 2 yp_yw_6 \\
  &- yq_yw_5 - 2yp_yw_{2,t} - yq_yw_{1,t} - 2pw_6 + qw_5 + 2pw_{2,t} + qw_{1,t})\\
  &- f(t)(yp_{yt}w_{2,t} + yp_yw_{2,tt} - yp_yw_{6,t} - yv_xp_{tt}p_y -
    yp_{yt}w_6
    - u_xp_yw_6 + u_xp_yw_{2,t}\\
  &- u_{xy}pw_{2,t} - u_xpq_{tt} + u_{xy}pw_6 - v_xpp_{tt} - yu_{xt}u_xp_tq_x
    - yu_xu_{xt}p_xq_t + yu_xp_yq_{tt}\\
  &+ yu_xp_{yt}q_t - yv_xp_tp_{yt} - pw_{2,tt} + pw_{6,t} + yu_xp_{tt}q_y
    - (u_xu_{xy} + u_{xt})p_tq \\
  &+ (u_x^2 + yu_{xt})p_tq_y - \frac{1}{2}yu_{xt}q_xw_5
    - \frac{1}{2}yu_{xt}q_xw_{1,t} + \frac{1}{2}yv_{xx}p_tw_{1,t}
    + \frac{1}{2}yv_{xx}p_tw_5\\
  &- \frac{1}{2}yv_{xt}p_xw_{1,t} - \frac{1}{2}yv_{xt}p_xw_5
    - y (u_xv_{xt} - u_{xt}v_x)p_tp_x + \frac{1}{2}yq_{yt}w_5
    + \frac{1}{2}yq_{yt}w_{1,t} \\
  &- \frac{1}{2}v_{xy}pw_5 - \frac{1}{2}v_{xy}pw_{1,t} - (u_xu_{xy} +
    u_{xt})pq_t
    + (yu_{xx} - 1)p_tw_{2,t} - (yu_{xx} - 1)p_tw_6\\
  &+ \frac{1}{2}u_xq_yw_5 - \frac{1}{2}qw_{1,tt} + \frac{1}{2}u_xq_yw_{1,t}
    + (u_xv_{xy} - u_{xy}v_x - v_{xt})pp_t + \frac{1}{2}(yu_{xx} - 1)q_tw_5\\
  &+ \frac{1}{2}v_xp_yw_{1,t} + \frac{1}{2}yq_yw_{1,tt}
    + (u_x^2 + yu_{xt})p_yq_t + \frac{1}{2}yq_yw_{5,t} - \frac{1}{2}u_{xy}qw_5
    - u_x p_{tt}q \\
  &- \frac{1}{2}u_{xy}qw_{1,t} + \frac{1}{2}v_xp_yw_5 - yv_{xt}p_tp_y
    + yu_xp_tq_{yt} - yu_{xt}p_xw_{2,t} + yu_{xt}p_xw_6\\
  &+ 2u_x(yu_{xx} - 1)p_tq_t - \frac{1}{2}qw_{5,t}
    + \frac{1}{2}(yu_{xx} - 1)q_tw_{1,t}),\\
  \phi_6^{w_5}
  &=\frac{1}{2}\dot{f}(2u_xp_tp_y + p_yw_5 + p_yw_{1,t})\\
  &- f(-2u_xu_{xt}p_tp_x + u_{xx}p_tw_5 + u_{xx}p_tw_{1,t} - u_{xt}p_xw_5
    - u_{xt}p_tp_y - u_xp_tp_{yt} - u_xp_{tt}p_y \\
  &- u_{xt}p_xw_{1,t} - \frac{1}{2}p_{yt}w_5 - \frac{1}{2}p_yw_{5,t}
    - \frac{1}{2}p_yw_{1,tt} - \frac{1}{2}p_{yt}w_{1,t}),\\
  \phi_6^{w_6}&=\frac{1}{2}\dot{f}(-2u_xp_tq_y - 2u_xp_y q_t + 2v_xp_tp_y
                + 2p_yw_6 - q_yw_5 - 2p_yw_{2,t} - q_yw_{1,t})\\
  &+ f(-u_xp_yq_{tt} - u_xp_{yt}q_t - p_{yt}w_{2,t} - p_yw_{2,tt} + p_yw_{6,t}
    + p_{yt}w_6 + \frac{1}{2}v_{xt}p_xw_{1,t}\\
  &+ u_xu_{xt}p_tq_x - u_{xt}p_xw_6 - u_{xt}p_yq_t + u_{xt}p_xw_{2,t}
    + u_{xx}p_tw_6 - u_{xx}p_tw_{2,t}\\
  &+ v_{xt}p_tp_y + v_xp_tp_{yt} + v_xp_{tt}p_y + u_xu_{xt} p_xq_t
    - 2u_{xx}u_xp_tq_t + \frac{1}{2}u_{xt}q_xw_{1,t} \\
  &- \frac{1}{2}u_{xx}q_tw_5 + (u_xv_{xt} - u_{xt}v_x)p_tp_x - u_xp_{tt}q_y
    - u_{xt}p_tq_y + \frac{1}{2}v_{xt}p_xw_5\\
  &- \frac{1}{2}v_{xx}p_tw_{1,t} + \frac{1}{2}u_{xt}q_xw_5- u_xp_tq_{yt}
    - \frac{1}{2}v_{xx}p_tw_5 - \frac{1}{2}u_{xx}q_tw_{1,t} -
    \frac{1}{2}q_{yt}w_5\\
  &- \frac{1}{2}q_{yt}w_{1,t} - \frac{1}{2}q_yw_{1,tt} -
    \frac{1}{2}q_yw_{5,t});\\
  \phi_7^{w_i}
  &=0,\quad i = 1,\dots,6;\\
  \phi_8^{w_i}
  &=0\quad i = 1,\dots,6;\\
  \phi_9^{w_i}
  &=0\quad i = 1,\dots,6;\\
  \phi_{10}^{w_i}
  &=0\quad i = 1,\dots,6;\\
  \phi_{11}^{w_i}
  &=0\quad i = 1,\dots,6;\\
  \phi_{12}^{w_1}
  &=0\quad i = 1,\dots,6,\\
  \phi_{12}^{w_2}
  &=\frac{1}{2}f(-2u_x^2p_tp_x + 2u_xp_tp_y - u_xp_xw_{1,t} - u_xp_xw_5
    + p_yw_5 + p_yw_{1,t}),\\
  \phi_{12}^{w_3}
  &=0,\\
  \phi_{12}^{w_4}
  &=\frac{1}{2}\dot{f}(2yu_xp_tp_y + 2u_xpp_t + yp_yw_5 + yp_yw_{1,t} - pw_5 -
    pw_{1,t})\\ 
  &+ f((u_x^2 + yu_{xt})p_tp_y + (u_xu_{xy} + u_{xt})pp_t
    + \frac{1}{2}(yu_{xx} - 1)p_tw_{1,t} + \frac{1}{2}u_xp_yw_{1,t}
    - \frac{1}{2}u_{xy}pw_{1,t}\\
  &+ \frac{1}{2}(yu_{xx} - 1)p_tw_5 - yu_{xt}u_xp_tp_x
    - \frac{1}{2}yu_{xt}p_xw_5 - \frac{1}{2}yu_{xt}p_xw_{1,t} + yu_xp_tp_{yt}
    + yu_xp_{tt}p_y\\
  &- \frac{1}{2}u_{xy}pw_5 + u_xpp_{tt} + \frac{1}{2}u_xp_yw_5
    + \frac{1}{2}yp_{yt}w_5 + \frac{1}{2}yp_yw_{5,t} + \frac{1}{2}yp_yw_{1,tt}\\
  &+ \frac{1}{2}yp_{yt}w_{1,t} - \frac{1}{2}pw_{5,t} - \frac{1}{2}pw_{1,tt}),\\
  \phi_{12}^{w_5}
  &=0,\\
  \phi_{12}^{w_6}
  &=\frac{1}{2}\dot{f}(2u_xp_tp_y + p_yw_5 + p_yw_{1,t})\\
  &+ \frac{1}{2}f(-2u_xu_{xt}p_tp_x + u_{xx}p_tw_5 + u_{xx}p_tw_{1,t}
    - u_{xt}p_xw_5 + 2u_{xt}p_tp_y + 2u_xp_tp_{yt} + 2u_xp_{tt}p_y\\
  &- u_{xt}p_xw_{1,t} + p_{yt}w_5 + p_yw_{5,t} + p_yw_{1,tt} + p_{yt}w_{1,t});\\
  \phi_{13}^{w_i}
  &=0\quad i = 1,\dots,6;\\
  \phi_{14}^{w_1}
  &=0\quad i = 1,\dots,6,\\
  \phi_{14}^{w_2}
  &=f(u_x^2p_xq_t - u_xv_xp_tp_x + u_xp_xw_{2,t} - u_xp_xw_6 + v_xp_tp_y
    - u_xp_yq_t - p_yw_{2,t} + p_yw_6),\\
  \phi_{14}^{w_3}
  &=0,\\
  \phi_{14}^{w_4}
  &=\dot{f}(yv_xp_tp_y - yu_xp_yq_t + v_xpp_t + u_xpq_t + yp_yw_6 -
    yp_yw_{2,t} - pw_6 + pw_{2,t})\\
  &- f(yu_{xt}v_xp_tp_x + yp_{yt}w_{2,t} + yp_yw_{2,tt} - yp_yw_{6,t}
    - yp_{yt}w_6 + yu_xp_yq_{tt} + yu_xp_{yt}q_t\\
  &- yv_xp_tp_{yt} - u_xp_yw_6 + u_xp_yw_{2,t} - u_{xy}pw_{2,t} - u_xpq_{tt}
    - v_xpp_{tt} + u_{xy}pw_6 - yv_xp_{tt}p_y\\
  &- yu_xu_{xt}p_xq_t - pw_{2,tt} + pw_{6,t} - (u_xv_x + yv_{xt})p_tp_y
    - (u_xu_{xy} + u_{xt})pq_t\\
  &+ (yu_{xx} - 1) p_tw_{2,t} - (yu_{xx} - 1)p_tw_6 - yu_{xt}p_xw_{2,t}
    - (u_{xy}v_x + v_{xt})p p_t + (u_x^2 + yu_{xt})p_yq_t\\
  &+ yu_{xt}p_xw_6 + u_x(yu_{xx} - 1)p_tq_t),\\
  \phi_{14}^{w_5}
  &=0,\\
  \phi_{14}^{w_6}
  &=\dot{f}(v_xp_tp_y - u_xp_yq_t + p_yw_6 - p_yw_{2,t})\\
  &- f(-v_xp_tp_{yt} - p_yw_{6,t} - v_{xt}p_tp_y + u_{xt}p_xw6 - u_{xx}p_tw_6
    - p_{yt}w_6 + u_xp_yq_{tt} - u_{xt}p_xw_{2,t}\\
  &+ p_yw_{2,tt} - u_xu_{xt}p_xq_t + u_{xt}v_xp_tp_x + u_{xx}u_xp_tq_t
    + p_{yt}w_{2,t}\\
  &- v_xp_{tt}p_y + u_xp_{yt}q_t + u_{xt}p_yq_t + u_{xx}p_tw_{2,t}).
\end{align*}


\begin{thebibliography}{99}

\bibitem{Baran-etal-Lagr-deform} H.~Baran, I.S.~Krasil{\cprime}schik,
  O.I.~Morozov, P.~Voj\v{c}\'{a}k, \emph{Higher symmetries of cotangent
    coverings for Lax-integrable multi-dimensional partial differential
    equations and Lagrangian deformations}, Journal of Physics: Conference
  Series, \textbf{482}, 012002, Physics and Mathematics of Nonlinear Phenomena
  2013 (PMNP2013) 22--29 June 2013, Gallipoli, Italy,
  \url{arXiv:1309.7435}
  
\bibitem{Baran-etal-compar-stud} H.~Baran, I.S.~Krasil{\cprime}schik,
  O.I.~Morozov, P.~Voj\v{c}\'{a}k, \emph{Nonlocal Symmetries of Integrable
    Linearly Degenerate Equations: A Comparative Study}, Theor.\ Math.\
  Phys.\ \textbf{196} (2018)
  1089--1110. \url{https://doi.org/10.1134/S0040577918080019},
  \url{arXiv:1611.04938}.
  
\bibitem{Jets} H.~Baran, M.~Marvan, \emph{Jets. A software for differential
    calculus on jet spaces and diffieties},
  \url{https://doi.org/jets.math.slu.cz}, \url{http://jets.math.slu.cz}.

\bibitem{AMS} A.V.~Bocharov et al., \emph{Symmetries of Differential Equations
    in Mathematical Physics and Natural Sciences}, edited by A.M.~Vinogradov
  and I.S.~Krasil{\cprime}shchik). Factorial Publ.\ House, 2nd edition, 2005 (in
  Russian). English translation: Amer.\ Math.\ Soc., 1999.
  
\bibitem{Dun-2004} M.~Dunajski, \emph{A class of Einstein-Weyl spaces
    associated to an integrable system of hydrodynamic type}, J.\ Geom.\
  Phys. \textbf{51} (2004) 1, 126--137,
  \url{https://doi.org/10.1016/j.geomphys.2004.01.004},
  \url{arXiv:nlin/0311024}.

\bibitem{Fer-Mos} E.V.~Ferapontov, J.~Moss, \emph{Linearly degenerate partial
    differential equations and quadratic line complexes}, Communications in
  Analysis and Geometry, \textbf{23} (2015) no.~1, 91--127,
  \url{https://dx.doi.org/10.4310/CAG.2015.v23.n1.a3},
  \url{arXiv:1204.2777}.

\bibitem{Heredit} B.~Fuchssteiner, A.S.~Fokas, \emph{Symplectic structures,
    their B\"{a}cklund transformations and hereditary symmetrie}s, Physica D
  \textbf{4} (1981) 47--66.
  
\bibitem{K-V-Gdeq} I.S.~Krasil{\cprime}shchik, A.~Verbovetsky, \emph{Geometry
    of jet spaces and integrable systems}, J.~Geom.\ and Phys. \textbf{61}
  (2011) 9, 1633--1674 \url{arXiv:1002.0077}.

\bibitem{Ker-Kras-Kluwer} P.H.M.~Kersten, I.S.~Krasil{\cprime}shchik,
  \emph{Symmetries and recursion operators for classical and supersymmetric
    differential equations}, Kluwer Acad.\ Publ., Dordrecht, 2000.
  
\bibitem{Kr-Ver-AMP} I.S.~Krasil{\cprime}shchik, A.M.~Verbovetsky,
  \emph{Recursion operators in the cotangent covering of the rdDym equation},
  Anal.\ Math.\ Phys. \textbf{12}, 1
  (2022). \url{https://doi.org/10.1007/s13324-021-00611-3}

\bibitem{Kra-Ver-Vit-Springer} I.S.~Krasil{\cprime}shchik, A.M.~Verbovetsky,
  R.~Vitolo, \emph{The Symbolic Computation of Integrability Structures for
    Partial Differential Equations}, Texts \& Monographs in Symbolic
  Computation, Springer, 2017.

\bibitem{VinKrasTrends} I.S.~Krasil{\cprime}shchik, A.M.~Vinogradov,
  \emph{Nonlocal trends in the geometry of differential equations: symmetries,
    conservation laws, and B\"{a}cklund transformations}, Acta Appl.\ Math.,
  \textbf{15} (1989) 1-2, 161--209, \url{https://doi.org/10.1007/BF00131935}.
  
\bibitem{Mar-another} M.~Marvan, \emph{Another look on recursion operators},
  in: Differential Geometry and Applications, Proc.\ Conf.\ Brno, 1995
  (Masaryk University, Brno, 1996) 393--402.
  
\bibitem{Mar-ort} M.~Marvan, \emph{Sufficient set of integrability conditions
    of an orthonomic system}, Found.\ Comput.\ Math. \textbf{9} (2009) 6,
  651--674.
  
\bibitem{Mikh} V.G.\ Mikhalev, \emph{On the Hamiltonian formalism for
    Korteweg-de Vries type hierarchies}, Functional Analysis and Its
  Applications \textbf{26} (1992) 140--142.
  
\bibitem{Pav} M.V.~Pavlov, \emph{Integrable hydrodynamic chain}s, J.\ Math.\
  Phys., \textbf{44} (2003), 4134--4156.
  
\bibitem{AMV-SecCal} A.M.~Vinogradov, \emph{Cohomological Analysis of Partial
    Differential Equations and Secondary Calculus}, Translations of
  Mathematical Monographs,, 204, American Mathematical Society, Providence,
  Rhode Island, USA, 2001

\end{thebibliography}
\end{document}